\newtheorem{theorem}{Theorem}
\newtheorem{lemma}{Lemma}
\newtheorem{corollary}{Corollary}
\def\B{{\mathcal{B}}}
\def\U{{\mathcal{U}}}
\begin{document}
\title{Utility Fair Optimisation of Antenna Tilt Angles in LTE Networks}
\author{Bahar~Partov$^1$,
        Douglas J.~Leith$^1$,
        and~Rouzbeh~Razavi$^2$\\
        $^1$Hamilton Institute, NUI Maynooth, $^2$ Bell Laboratories, Alcatel-Lucent, Dublin\thanks{This material is based upon works supported by the Science Foundation Ireland under Grant No. 11/PI/1177 and by Bell Labs Ireland.}}

\maketitle

\begin{abstract}
We formulate adaptation of antenna tilt angle as a utility fair optimisation task.   This optimisation problem is non-convex, but in this paper we show that under reasonable conditions it can be reformulated as a convex optimisation.   Using this insight, we develop a lightweight method for finding the optimal antenna tilt angles, making use of measurements which are already available at base stations, and suited to distributed implementation.
\end{abstract}

\begin{IEEEkeywords}
Antenna tilt angle, LTE, Proportional fairness, Maximising capacity, Optimisation
\end{IEEEkeywords}


 \section{Introduction}

The antenna tilt angle of wireless base-stations is known to be a key factor in determining cell coverage and to play a significant role in interference management  \cite{forkel2002effect}, \cite{athley2010impact}.   While traditionally adjustment of tilt angle has largely been carried out manually, modern base stations increasingly allow automated adjustment.   This creates the potential for more dynamic adaptation of tilt angle, for example to better match cell coverage to the distribution of user equipments and traffic, to reduce coverage holes created by failures in neighbouring stations, to better manage interference from the user deployment of femtocells, \emph{etc}.   The benefits of self configuration and self optimisation are already recognised in LTE release 9  \cite{3GPPTR136902}, and automated adaptation of tilt angle in particular has been the subject of recent interest.    

In this paper we formulate adaptation of antenna tilt angle as a utility fair optimisation task.   Namely, the objective is to jointly adjust antenna tilt angles within the cellular network so as to maximise user utility, subject to network constraints.   Adjustments at base stations must be carried out jointly in a co-ordinated manner in order to manage interference.    This optimisation problem is non-convex, but in this paper we show that under certain conditions it can be reformulated as a convex optimisation.     Specifically, we show that (i) in the high \textcolor{black}{signal to interference ratio} (SINR) operating regime and with an appropriate choice of decision variables, the optimisation is convex for any concave utility function, and (ii) in any SINR regime the optimisation can be formulated in a convex manner when the objective is a proportional fair rate allocation.    Since the optimisation is not well-suited to solution using standard dual methods, we develop a primal-dual method for finding the optimal antenna tilt angles.  This approach is lightweight, making use of measurements which are already available at base stations, and suited to distributed implementation.

The rest of the paper is organized as follows. In Section \ref{sec:relatedwork} we summarize the existing work in the area.  In Section \ref{sec:networkmodel} we introduce our network model, which is 
based on 3GPP standard
, and in Section \ref{sec:convexity-high-SINR} we analyse its convexity properties in the high SINR regime.  In Section \ref{sec:proportionalfair} we extend the analysis to general SINR regimes. In Section \ref{sec:perf} we carry out a  performance evaluation of a realistic setup and finally, in Section \ref{sec:conclusion}, we summarise our conclusions.  

 
 \section{Related Work}
\label{sec:relatedwork}
The analysis and modelling of the impact of the antenna tilt angle on cell performance has been well studied, see for example \cite{benner1996effects,yilmaz2009analysis} and references therein.    Recently, self-optimisation of tilt angle has started to attract attention,  but most of this work makes use of heuristic approaches.  In \cite{conf/vtc/EckhardtKG11} a heuristic method is proposed for adjusting tilt to maximise average spectral efficiency within the network, while \cite{Razavi} proposes a combination of fuzzy and reinforcement learning.  In \cite{temesvary2009self}  simulated annealing is considered for joint self-configuration of antenna tilt angle and power and in \cite{calcev2006antenna} a non-cooperative game approach between neighbouring base stations is studied. Offline planning of tilt angle is considered, for example, in \cite{eisenblatter2008capacity}, using a heuristic search method combined with a mixed integer local search.  \textcolor{black}{In the present paper, we take a more formal, rigorous approach and show that tilt angle optimisation can, in fact, be formulated as a convex problem.  Building on this result, we then introduce a lightweight distributed algorithm based on primal-dual subgradient updates and show that this algorithm is guaranteed to converge arbitrarily closely to the network optimum.}
 
\section{Network Model}
\label{sec:networkmodel}
\subsection{Network Architecture}
\label{subsec:topology} 
The network consists of a set $\B$ of base stations and a set $\U$ of \textcolor{black}{User Equipment} (UE), with UE $u\in \U$ receiving downlink traffic transmitted from base station $b(u) \in \B$.   For base stations with sectoral antennas, we define a separate element in $\B$ for each antenna.   We denote the $(x,y)$ geographical co-ordinates of base station $b$ by $(x_b,y_b)$ and of user equipment $u$ by $(x_u,y_u)$.  The distance between user $u$ and base station $b$ is therefore given by 

\begin{IEEEeqnarray}{c}
d_{b,u} = \sqrt{(x_u-x_b)^2+(y_u-y_b)^2}
\end{IEEEeqnarray}

 \subsection{Antenna Gain and Path Loss}
\label{tilt-sinr}
The received power on sub-carrier $n$ from base station $b\in \B$ at user $u\in\U$ is given by $\tilde{G}_{b,u}(\theta_{b})\rho_{b,u}P_{b,n}$, where $\tilde{G}_{b,u}(\theta_{b})$ is the base station antenna gain, $\rho_{b,u}$ the path loss between $b$ and $u$, $P_{b,n}$ is the base station transmit power 
for sub-carrier $n$
. For simplicity, shadowing and fast fading are not considered in the equations.  We model path loss, as recommended in \cite{3GPPTR36.814V9}, by
\begin{IEEEeqnarray}{c}
\rho_{b,u} = \rho_0 d_{b,u}^{- \beta}
\end{IEEEeqnarray}
with fixed path loss factor $\rho_0$, path loss exponent $\beta$ and distance $d$ in kilometres.  For a given antenna type, the antenna gain $~\tilde{G}_{b,u}(\theta_{b})$  can be determined given the relative positions of $b$ and $u$, the antenna tilt angle $\theta_b$ and the azimuth angle $\phi_b$.  With regard to the latter, changing the tilt and/or azimuth angles  changes the direction of the antenna's main lobe, see Fig \ref{fig:angle}.   We will assume that the azimuth angle is held fixed but allow the antenna tilt angle to be adjusted within the interval $[\underline{\theta},\bar{\theta}]$.   Following~\cite{3GPPTR36.814V9}, the antenna gain can then be modelled by:

\begin{IEEEeqnarray}{c}
\label{total_Gain}
\tilde{G}_{b,u}(\theta_{b}) = \tilde{G}_0 \tilde{G}_{v}(\theta_b, d_{b,u})
\end{IEEEeqnarray} 
where $\tilde{G}_0$ is the maximum gain of the antenna, 
\begin{align}
\label{eq:antenna-pattern-no-min}
\tilde{G}_{v}(\theta_{b}, d_{b,u})
&=10^{-1.2\left(\frac{\theta_{b,u} - \theta_{b}}{\theta_{3dB}}\right)^{2}}
\end{align}
is the antenna vertical attenuation, $\theta_{b,u}=\tan^{-1}\left(h/d_{b,u}\right)$, $h$ is the height difference between the base station and UE (which, for simplicity, we assume is the same for all base stations and users) and $\theta_{3dB}$ the vertical half power beam width of the antenna.   Figure~\ref{fig:vertical gain} illustrates the ability of (\ref{eq:antenna-pattern-no-min}) to accurately model the main lobe of an antenna which is popular in cellular networks.

\begin{figure}
\centering
\includegraphics[width=0.5\columnwidth]{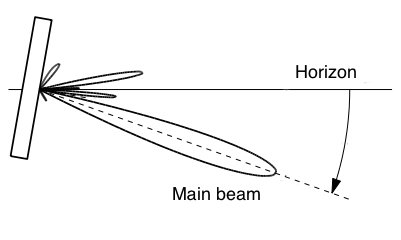}
\caption{ Schematic illustrating relationship between antenna main lobe and tilt angle.   } 
\label{fig:angle}
\end{figure}

\begin{figure}
\centering
\includegraphics[width=0.8\columnwidth]{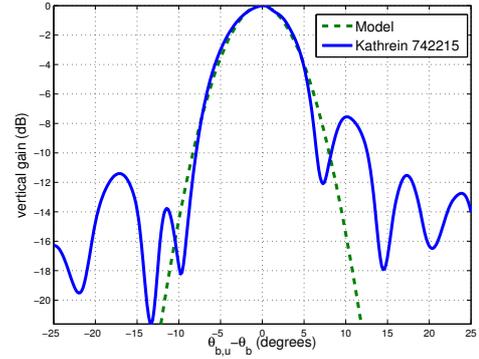}
\caption{ Comparison of antenna main lobe vertical gain model (\ref{eq:antenna-pattern-no-min}) (dashed line) and  measured antenna gain (solid line) for a Kathrein 742215 antenna, $\theta_{3dB}=9^\circ$.   } 
\label{fig:vertical gain}
\end{figure}

It will prove useful to use the quantity $G_{v}(\theta_{b}, d_{b,u}) := \log{\tilde{G}_{v}(\theta_{b}, d_{b,u})}$.   It will also prove useful to consider the following linear approximation $\hat{G}_{v}(\theta_{b}, d_{b,u})$ to antenna gain exponent $G_{v}(\theta_{b}, d_{b,u})$ about tilt angle $\theta_{0}$,
\begin{align}
\label{eq:G_V_appx}
\hat{G}_{v}(\theta_{b}, d_{b,u})= \frac{-1.2 \log10}{ \theta_{3dB}^{2}}\Big((\theta_{b,u}-\theta_{0})^{2}  + 2(\theta_{b,u}-\theta_0)\theta_{b}\Big)
\end{align}
This linear approximation is illustrated by the solid line in Figure~\ref{fig:G_v_appx}.   It is reasonably accurate provided adaptation of the antenna angle $\theta_b$ does not cause $\theta_{b,u}-\theta_b$ to change sign (in which case the side of the main antenna lobe facing the user changes and so the slope of the linear approximation changes sign).    This is assumed to be the case for the antennas of base stations other than that to which the UE is associated, which is only a mild assumption since otherwise interference from these base stations can be expected to be excessive.

\begin{figure}
\centering
\includegraphics[width=0.8\columnwidth]{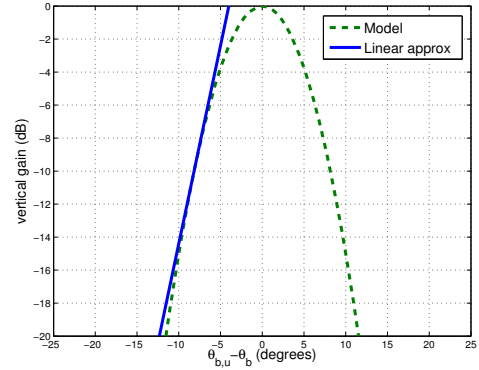}
\caption{ Illustrating linear approximation $\hat{G}_{v}(\theta_{b}, d_{b,u})$ to $G_{v}(\theta_{b}, d_{b,u})$, $\theta_{3dB}=9^\circ$.}
\label{fig:G_v_appx}
\end{figure}

 \subsection{User Throughput}
The downlink throughput of the user equipment $u \in \U$ associated with base station $b(u)$ is given by:
\begin{IEEEeqnarray}{c}
\label{eq:capacity-truncated}
R_{u}(\Theta)=\min\{\bar{r}, r_{u}(\Theta)\}, \quad u \in \U
\end{IEEEeqnarray}
where $\Theta$ is the vector $[\theta_b]$, $b\in\B$ of tilt angles, $\bar{r}$ is the maximum achievable throughput (limited by the available modulation and coding schemes), and
\begin{IEEEeqnarray}{c}
\label{eq:user-capacity-shannon}
r_{u}(\Theta)=\frac{w}{N_{sc}} \sum_{n=1}^{N_{sc}} \log(1+\kappa\gamma_{u,n}(\Theta))
\end{IEEEeqnarray}
Here $N_{sc}$ is the number of sub-carriers, $w$ the channel bandwidth, $\kappa$ a loss factor capturing non-ideal coding \emph{etc}, and  $\gamma_{u,n}(\Theta)$ SINR on sub carrier $n$ for user $u$,
\begin{IEEEeqnarray}{c}
\label{eq:dl-sinr}
\gamma_{u,n}(\Theta)=\frac{H_u(\theta_{b(u)})}{\sum_{c\in \B\setminus\{b(u)
\}}\hat{H}_u(c) +\eta_{u,n}}
\end{IEEEeqnarray}
where $H_{u}(\theta_{b}):=e^{G_{b,u}(\theta_{b})}\rho_{b,u}P_{b,u}$ is the received power from base station $b(u)$ by user $u$,  $\hat{H}_{u}(\theta_{c}):=e^{\hat{G}_{c,u}(\theta_{c})}\rho_{c,u}P_{c,u}$ is the received power from base station $c \ne b(u)$ by user $u$ and $\eta_{u,n}$ the channel noise for user $u$ on sub carrier $n$.  Observe that in $\hat{H}_{u}(\theta_{c})$ we make use of linear approximation $\hat{G}_{v}$.

 \section{ High SINR Regime}
\label{sec:convexity-high-SINR}

In the high SINR regime, the downlink throughput (\ref{eq:capacity-truncated}) can be accurately approximated by
\begin{IEEEeqnarray}{c}
\hat{R}_{u}(\Theta)=\min\{\bar{r}, \hat{r}_{u}(\Theta)\}, \quad u \in \U.
\end{IEEEeqnarray}
where
\begin{IEEEeqnarray}{c}
\label{eq:user-capcity-hat}
\hat{r}_{u}(\Theta)=\frac{w}{N_{sc}} \sum_{n=1}^{N_{sc}}\log( \kappa\gamma_{u,n}(\Theta))
\end{IEEEeqnarray} 
%

\subsection{Utility Fair Optimisation of Tilt Angle}
Under the assumption of high SINR operation, we can formulate the selection of antenna tilt angles as the following optimisation problem $(P1)$:
\begin{IEEEeqnarray}{rCl}
\max_{\Theta}  &\sum_{u\in \U} U( \hat{R}_{u}(\Theta))  \label{eq:main-objective} \\
s.t.\quad & \underline{\theta} \le \theta_b \le \bar{\theta}, \quad b\in\B \label{eq:constraint1}\\
& \underline{r} \leq \hat{R}_{u}(\Theta), \quad u \in \U \label{eq:constraint2}
\end{IEEEeqnarray}
where $U(\cdot)$ is a concave increasing utility function and $\underline{r}\ge 0$.  Constraint  (\ref{eq:constraint1}) captures restrictions on the range of feasible antenna tilt angles, while (\ref{eq:constraint2}) ensures that each user receives a specified minimum throughput (which is expected to mainly be important for users at the edge of a cell who might otherwise be assigned too low a throughput).

\subsection{Convexity Properties}

\begin{lemma}\label{th:one}
$\log H_u(\theta_{b})$ is strictly concave in $\theta_{b}$.
\end{lemma}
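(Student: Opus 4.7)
The plan is to unpack the definition of $H_u(\theta_b)$ and show directly that $\log H_u(\theta_b)$ is a negative quadratic in $\theta_b$ plus terms that do not depend on $\theta_b$. By definition $H_u(\theta_b) = e^{G_{b,u}(\theta_b)} \rho_{b,u} P_{b,u}$ where, combining (\ref{total_Gain}) and (\ref{eq:antenna-pattern-no-min}), $G_{b,u}(\theta_b) = \log \tilde{G}_{b,u}(\theta_b) = \log \tilde{G}_0 + G_v(\theta_b, d_{b,u})$. Taking logarithms thus gives
\begin{IEEEeqnarray}{c}
\log H_u(\theta_b) = \log \tilde{G}_0 + G_v(\theta_b, d_{b,u}) + \log(\rho_{b,u} P_{b,u}),
\end{IEEEeqnarray}
and everything except $G_v(\theta_b, d_{b,u})$ is constant in $\theta_b$ once the UE location is fixed (since $\rho_{b,u}$ depends only on $d_{b,u}$, not on tilt).

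Next I would invoke the explicit expression from (\ref{eq:antenna-pattern-no-min}), namely
\begin{IEEEeqnarray}{c}
G_v(\theta_b, d_{b,u}) = -\frac{1.2 \log 10}{\theta_{3dB}^{2}} \bigl(\theta_{b,u} - \theta_b\bigr)^{2},
\end{IEEEeqnarray}
where $\theta_{b,u} = \tan^{-1}(h/d_{b,u})$ does not depend on $\theta_b$. This is manifestly a negative quadratic in $\theta_b$: its second derivative with respect to $\theta_b$ is the strictly negative constant $-2 \cdot 1.2 \log 10 / \theta_{3dB}^{2}$. Concavity of $\log H_u$ follows immediately, and strictness follows from the second derivative being bounded strictly away from zero.

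There is no real obstacle here: the lemma is essentially an observation that the 3GPP main-lobe model has a Gaussian shape, so its log-antenna-gain is a negative quadratic in the tilt variable. The only thing to be careful about is to note explicitly that the path-loss factor $\rho_{b,u}$, the transmit power $P_{b,u}$, the maximum gain $\tilde{G}_0$, and the geometric quantity $\theta_{b,u}$ are all independent of $\theta_b$, so they contribute only additive constants (in $\theta_b$) and do not affect concavity.
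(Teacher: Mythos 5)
Your proof is correct and takes essentially the same route as the paper's: expand $\log H_u(\theta_b)$ into the antenna-gain exponent plus $\theta_b$-independent constants, observe that (\ref{eq:antenna-pattern-no-min}) makes this exponent a negative quadratic in $\theta_b$, and conclude strict concavity from the constant second derivative $-2.4\log 10/\theta_{3dB}^2<0$. The only (immaterial) difference is that you track the $\log\tilde{G}_0$ term explicitly as an additive constant, which the paper's proof silently absorbs.
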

\begin{proof}
We have 
\begin{align*}
\log H_u(\theta_{b}) &= G_{b,u}(\theta_{b}) + \log \rho_{b,u}+\log P_{b,u}\\
& = -1.2\frac{\log10}{\theta_{3dB}^2}\left(\theta_{b,u}-\theta_b\right)^2 + \log \rho_{b,u}+\log P_{b,u}
\end{align*}
Now $\log \rho_{b,u}$, $\log P_{b,u}$, $\theta_{3dB}$ and $\theta_{b,u}$ are constants, so we only need to consider concavity with respect to $\theta_b$.  It can be verified that $\frac{d^2\log H_u(\theta_{b})}{d\theta_{b}^2}=-2.4\frac{\log10}{\theta_{3dB}^2}<0$.
\end{proof}

\begin{lemma}\label{th:two}
$\log\left(\sum_{c\in\B\setminus\{b(u)\}} \hat{H}_u(\theta_c)+\eta_{u,n}\right)$ is convex.
\end{lemma}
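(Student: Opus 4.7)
The plan is to reduce the statement to the standard fact that the log-sum-exp function $\mathrm{lse}(z_1,\dots,z_k)=\log\sum_i e^{z_i}$ is convex, and that composing a convex function with an affine map preserves convexity.

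First I would look back at the definitions. From (\ref{eq:G_V_appx}), the approximate antenna gain exponent $\hat{G}_{c,u}(\theta_c)$ is an \emph{affine} function of $\theta_c$ (a constant term $(\theta_{c,u}-\theta_0)^2$ plus a linear term $2(\theta_{c,u}-\theta_0)\theta_c$, all multiplied by the constant $-1.2\log10/\theta_{3dB}^2$). Since $\hat H_u(\theta_c)=e^{\hat G_{c,u}(\theta_c)}\rho_{c,u}P_{c,u}$, taking logs gives $\log\hat H_u(\theta_c)=\hat G_{c,u}(\theta_c)+\log\rho_{c,u}+\log P_{c,u}$, which is affine in $\theta_c$ and hence affine in the full tilt vector $\Theta$ (constant in the coordinates $c'\ne c$).

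Next I would rewrite the expression inside the log as a sum of exponentials of affine functions of $\Theta$:
\begin{align*}
\sum_{c\in\B\setminus\{b(u)\}}\hat H_u(\theta_c)+\eta_{u,n}
=\sum_{c\in\B\setminus\{b(u)\}} e^{\log\hat H_u(\theta_c)} + e^{\log\eta_{u,n}},
\end{align*}
where $\eta_{u,n}>0$ so $\log\eta_{u,n}$ is a well-defined (constant) affine function. Taking the outer logarithm identifies the whole expression as $\mathrm{lse}(a_1(\Theta),\dots,a_k(\Theta))$ with each $a_i$ affine in $\Theta$.

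The final step is to invoke the two standard facts that $\mathrm{lse}$ is convex on $\mathbb{R}^k$ and that convexity is preserved under composition with an affine map; both are textbook results, so no further calculation is needed. There is no real obstacle here: the only thing one has to notice is the affinity of $\hat G_{c,u}$ in $\theta_c$, which is the very reason the linearised quantity $\hat H_u$ (rather than the exact $H_u$) was introduced in (\ref{eq:dl-sinr}). An equivalent route, should one prefer to avoid citing log-sum-exp directly, is to observe that each $\hat H_u(\theta_c)$ is log-convex (exponential of an affine function), the positive constant $\eta_{u,n}$ is trivially log-convex, and the sum of log-convex functions is log-convex; taking logs then yields the claim.
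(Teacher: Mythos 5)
Your proposal is correct and follows essentially the same route as the paper: identify the expression as a log-sum-exp of functions affine in $\Theta$ (thanks to the linearisation (\ref{eq:G_V_appx})) and invoke convexity of log-sum-exp together with preservation of convexity under affine composition. Your treatment of the noise term, writing $\eta_{u,n}=e^{\log\eta_{u,n}}$ so that it becomes just another exponential with a constant exponent, is in fact a slightly cleaner justification than the paper's appeal to ``translation.''
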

\begin{proof}
Rewrite as\\ $\log\left(\sum_{c\in\B\setminus\{b(u)\}} e^{\hat{G}_{c,u}(\theta_{c})+\log\rho_{c,u}P_{c,u}}+\eta_{u,n}\right)$.   This can be expressed as  $\log\left(\sum_{c\in\B\setminus\{b(u)\}} e^{h_{c,u}(\theta_c)}+\eta_{u,n}\right)$ with $h_{c,u}(\theta_c):=\hat{G}_{c,u}(\theta_{c})+\log\rho_{c,u}P_{c,u}$. By\textcolor{black}{~\cite[p74]{boyd2004convex}}, the log of a sum of exponentials is convex.  The additive term $\eta_{u,n}$ acts as a translation and by \textcolor{black}{~\cite[p79]{boyd2004convex}}  convexity is preserved under translation.  By approximation (\ref{eq:G_V_appx}), the function $h_{c,u}(\theta_c)$ is affine in $\theta_c$, and by \textcolor{black}{\cite[p79]{boyd2004convex}} when composed with the log of a sum of exponentials the resulting function remains convex.  
\end{proof}
It follows from Lemmas \ref{th:one} and \ref{th:two} that
\begin{theorem}\label{th:cor1}
$\hat{R}_u(\Theta)$ is concave in $\Theta$.      
\end{theorem}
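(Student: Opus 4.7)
The plan is to unpack $\hat{R}_u(\Theta)$ into ingredients whose concavity/convexity properties are delivered by Lemmas \ref{th:one} and \ref{th:two}, and then use standard concavity-preserving operations to glue them together.

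First I would take the logarithm inside each summand of \eqref{eq:user-capcity-hat} and expand using \eqref{eq:dl-sinr}, writing
\begin{align*}
\log\bigl(\kappa\gamma_{u,n}(\Theta)\bigr)
&= \log\kappa + \log H_u(\theta_{b(u)}) \\
&\quad - \log\!\Bigl(\sum_{c\in\B\setminus\{b(u)\}}\hat{H}_u(\theta_c)+\eta_{u,n}\Bigr).
\end{align*}
The first term is a constant. The second term depends only on the single coordinate $\theta_{b(u)}$ of $\Theta$ and is strictly concave in that coordinate by Lemma \ref{th:one}; viewed as a function of the full vector $\Theta$ it is therefore concave (constant in the remaining coordinates). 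The third term is convex in $\Theta$ by Lemma \ref{th:two}, so its negative is concave.

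Next, since the sum of concave functions is concave, $\log(\kappa\gamma_{u,n}(\Theta))$ is concave in $\Theta$ for every sub-carrier $n$. Taking the non-negative linear combination with weights $w/N_{sc}$ over $n=1,\dots,N_{sc}$ preserves concavity, so $\hat{r}_u(\Theta)$ is concave. Finally, $\hat{R}_u(\Theta)=\min\{\bar r,\hat r_u(\Theta)\}$ is the pointwise minimum of two concave functions (the constant $\bar r$ is trivially concave), which is again concave, giving the claim.

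There is no real obstacle here beyond bookkeeping: the only thing worth stating carefully is that Lemma \ref{th:one} is phrased as concavity in the scalar $\theta_b$, and we need to promote this to concavity in the vector $\Theta$, which is immediate because the function does not depend on the other coordinates. Everything else is a direct application of the standard operations (nonnegative weighted sums and pointwise minimum) that preserve concavity (e.g.\ \cite[\S3.2]{boyd2004convex}).
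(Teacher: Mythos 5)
Your proposal is correct and follows essentially the same route as the paper: expand $\log(\kappa\gamma_{u,n})$ via Lemmas \ref{th:one} and \ref{th:two}, take the nonnegative weighted sum over sub-carriers, and handle the truncation by $\bar r$ via a standard concavity-preserving operation (you use the pointwise minimum of concave functions, the paper phrases it as composition with the concave nondecreasing $\min$; both are valid). Your explicit remark about promoting Lemma \ref{th:one} from concavity in the scalar $\theta_{b(u)}$ to concavity in the full vector $\Theta$ is a small point the paper leaves implicit, but it does not change the argument.
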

\begin{proof}
Recalling from (\ref{eq:dl-sinr})  and (\ref{eq:user-capcity-hat}):
\begin{IEEEeqnarray}{rCl}
\hat{r}_u(\Theta) =\frac{w}{N_{sc}}\sum_{n=1}^{N_{sc}}&\bigg(&\log\kappa + \log H_u(\theta_{b(u)})\nonumber\\
&& \: - \log\big(\sum_{c\in\B\setminus\{b(u)\}} \hat{H}_u(c) + \eta_{u,n}\big)\bigg)\IEEEeqnarraynumspace
\end{IEEEeqnarray}
By Lemmas  \ref{th:one} and  \ref{th:two}, $\hat{r}_u(\Theta)$ is concave in $\Theta$.  Since the $\min$ function is concave and nondecreasing, it follows that $\hat{R}_u(\Theta):=\min\{\hat{r}_u(\Theta),\bar{r}\}$ is also concave in $\Theta$.  
\end{proof}
Note that $\hat{R}_u(\Theta)$  is not strictly concave in $\Theta$ since $\log H_u(\theta_{b(u)})$ is only strictly concave in $\theta_b(u)$ but not in the other elements of $\Theta$.  Nevertheless, under mild conditions  $\sum_{u\in\U} \hat{R}_u(\Theta)$ is strictly concave in $\Theta$:
\begin{theorem}\label{th:three}
Suppose $\B=\cup_{u\in\U} b(u)$, \emph{i.e.} every base station $b\in\B$ has at least one associated UE $u\in\U$.  Then $\sum_{u\in\U}\hat{r}_u(\Theta)$ is strictly concave in $\Theta$ (and so the solution to \textcolor{black}{problem} $(P1)$ is unique).   
\end{theorem}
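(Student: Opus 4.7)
The plan is to decompose $\sum_{u\in\U}\hat r_u(\Theta)$ into a part that is separable across the coordinates $\theta_b$ of $\Theta$ (and strictly concave in each one thanks to Lemma~\ref{th:one}) plus a remainder that is jointly concave (thanks to Lemma~\ref{th:two}), and then use the hypothesis $\B=\cup_{u\in\U}b(u)$ to guarantee that every coordinate is touched by the strictly concave part.

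Concretely, I would start from the formula
\begin{align*}
\hat r_u(\Theta)=w\log\kappa + w\log H_u(\theta_{b(u)}) - \frac{w}{N_{sc}}\sum_{n=1}^{N_{sc}}\log\Big(\sum_{c\in\B\setminus\{b(u)\}}\hat H_u(\theta_c)+\eta_{u,n}\Big)
\end{align*}
derived in the proof of Theorem~\ref{th:cor1}, and write $\sum_{u\in\U}\hat r_u(\Theta)=S(\Theta)+T(\Theta)$, where $S(\Theta):=w\sum_{u\in\U}\log H_u(\theta_{b(u)})$ collects the useful-signal terms and $T(\Theta)$ collects the constants and interference log-sum-exp terms. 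By Lemma~\ref{th:two}, each summand in $T(\Theta)$ is convex, so $T$ is concave in $\Theta$.

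Next, I would rearrange $S$ by grouping users according to their serving base station:
\begin{align*}
S(\Theta)=w\sum_{b\in\B}\;\sum_{u:\,b(u)=b}\log H_u(\theta_b).
\end{align*}
This displays $S$ as a sum of functions each depending on a single, distinct coordinate $\theta_b$. Under the hypothesis that every base station serves at least one UE, no inner sum is empty, so by Lemma~\ref{th:one} each inner sum is strictly concave in the scalar $\theta_b$. A separable sum of scalar strictly concave functions has a diagonal Hessian with strictly negative entries, hence $S$ is strictly concave on the full vector $\Theta$. The sum of a strictly concave function and a concave function is strictly concave, so $\sum_u\hat r_u$ is strictly concave in $\Theta$; uniqueness of the $(P1)$ maximiser then follows from the usual argument combining strict concavity with the monotonicity of $\min\{\cdot,\bar r\}$ and concavity of $U$.

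The main subtlety, and the only place the hypothesis is used, is the separability step: Lemma~\ref{th:one} only delivers strict concavity of $\log H_u$ in the single coordinate $\theta_{b(u)}$, so if some base station $b$ had no associated UE then $\Theta$ would admit a direction (the $\theta_b$-axis) along which $S$, and hence the full sum, is merely affine, and strict concavity would fail. The assumption $\B=\cup_{u}b(u)$ rules this out and is exactly what turns coordinatewise strict concavity into joint strict concavity.
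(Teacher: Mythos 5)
Your proposal is correct and follows essentially the same route as the paper: split $\sum_u\hat r_u$ into the serving-signal part $\sum_u\log H_u(\theta_{b(u)})$, which the hypothesis $\B=\cup_u b(u)$ makes a separable sum of strictly concave scalar functions touching every coordinate of $\Theta$, plus a concave interference part from Lemma~\ref{th:two}, and conclude via ``strictly concave plus concave is strictly concave.'' The only cosmetic difference is that you justify joint strict concavity of the separable part via the diagonal Hessian, whereas the paper writes out the defining inequality directly; both are valid.
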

\begin{proof}
We have
\begin{IEEEeqnarray*}{rCl}
\sum_{u\in\U}\hat{r}_u(\Theta) 
 = \frac{w}{N_{sc}}\sum_{n=1}^{N_{sc}}&\bigg(&\sum_{u\in\U}\log\kappa +\sum_{u\in\U}\log H_u(\theta_{b(u)}) \\
 && \: -  \sum_{u\in\U}\log(\sum_{c\in\B\setminus\{b(u)\}} \hat{H}_u(c) + \eta_{u,n})\bigg)
\end{IEEEeqnarray*}
Recall $H_u(\theta_{b(u)})$ is strictly concave in $\theta_{b(u)}$ (by Lemma \ref{th:one}).   The sum $\sum_{u\in\U}\log H_u(\theta_{b(u)})$ is therefore strictly concave in every $\theta_b$, $b\in\cup_{u\in\U} b(u)=\B$.  It is therefore strictly concave in $\Theta$ (in more detail, for any $\Theta^1$, $\Theta^2$ and $\alpha\in[0,1]$ we have $\sum_{u\in\U}\log H_u(\alpha\theta^1_{b(u)}+(1-\alpha)\theta^2_{b(u)})>\sum_{u\in\U}\left(\alpha\log H_u(\theta^1_{b(u)})+(1-\alpha)\log H_u(\theta^2_{b(u)})\right)=\alpha\sum_{u\in\U}\log H_u(\theta^1_{b(u)})+(1-\alpha)\sum_{u\in\U}\log H_u(\theta^2_{b(u)})$). The result then follows from the fact that the sum of a strictly concave function and a concave function is strictly concave.   
\end{proof}
And we have the following corollary,
\begin{corollary}\label{th:cor}
When each base station has at least one user with throughput less that $\bar{r}$, then $\sum_{u\in\U}\hat{R}_u(\Theta)$ is strictly concave in $\Theta$.    
\end{corollary}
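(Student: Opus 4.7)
The plan is to adapt the argument of Theorem~\ref{th:three} to the truncated throughput $\hat{R}_u(\Theta)=\min\{\hat{r}_u(\Theta),\bar{r}\}$. Theorem~\ref{th:cor1} already gives that each $\hat{R}_u$ is concave, so the sum $\sum_{u\in\U}\hat{R}_u(\Theta)$ is concave; the task is to upgrade this to strict concavity using the hypothesis that for every base station $b\in\B$ there is at least one user $u$ with $b(u)=b$ whose throughput satisfies $\hat{R}_u(\Theta)<\bar{r}$.

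I would then take two distinct feasible points $\Theta^1\ne\Theta^2$ and a convex combination $\Theta^\alpha=\alpha\Theta^1+(1-\alpha)\Theta^2$ with $\alpha\in(0,1)$, and aim at the strict inequality $\sum_{u}\hat{R}_u(\Theta^\alpha)>\alpha\sum_{u}\hat{R}_u(\Theta^1)+(1-\alpha)\sum_{u}\hat{R}_u(\Theta^2)$. Because $\Theta^1\ne\Theta^2$, there exists some base station $b^\star\in\B$ with $\theta^1_{b^\star}\ne\theta^2_{b^\star}$. By the hypothesis applied at $\Theta^\alpha$, pick a witness user $u^\star$ with $b(u^\star)=b^\star$ and $\hat{R}_{u^\star}(\Theta^\alpha)<\bar{r}$, so that $\hat{R}_{u^\star}(\Theta^\alpha)=\hat{r}_{u^\star}(\Theta^\alpha)$.

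At this point I would re-run the Theorem~\ref{th:three} calculation for this single witness user. The $\log H_{u^\star}(\theta_{b^\star})$ term inside $\hat{r}_{u^\star}$ is strictly concave in $\theta_{b^\star}$ by Lemma~\ref{th:one}, while the interference term is concave by Lemma~\ref{th:two}, so the condition $\theta^1_{b^\star}\ne\theta^2_{b^\star}$ yields $\hat{r}_{u^\star}(\Theta^\alpha)>\alpha\hat{r}_{u^\star}(\Theta^1)+(1-\alpha)\hat{r}_{u^\star}(\Theta^2)$. Combined with the pointwise bound $\hat{R}_{u^\star}\le\hat{r}_{u^\star}$ and the equality $\hat{R}_{u^\star}(\Theta^\alpha)=\hat{r}_{u^\star}(\Theta^\alpha)$, this delivers a strict inequality for $u^\star$; all other users contribute the non-strict inequality from concavity of $\hat{R}_u$ (Theorem~\ref{th:cor1}), and summing over $\U$ gives the desired strict concavity.

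The main obstacle is the $\min$ operation, which flattens $\hat{R}_u$ to the constant $\bar{r}$ wherever $\hat{r}_u>\bar{r}$ and so destroys strict concavity on that region. The hypothesis of the corollary is tailored precisely to sidestep this: it guarantees, for each base station $b$, a witness user whose saturation constraint is inactive, so that the strict concavity of $\log H_u(\theta_b)$ in $\theta_b$ survives the truncation and propagates to the full sum.
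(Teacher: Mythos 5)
Your proof is correct, and the underlying idea is the same as the paper's: the hypothesis supplies, for each base station, an unsaturated user at which $\hat{R}_u$ coincides with the strictly concave $\hat{r}_u$, while the saturated users need only ordinary concavity. The execution differs, however. The paper collects all unsaturated users into a set $\bar{\U}$, applies Theorem \ref{th:three} to $\sum_{u\in\bar{\U}}\hat{r}_u$, and then writes $\sum_{u\in\U}\hat{R}_u(\Theta) = \sum_{u\in\bar{\U}}\hat{R}_u(\Theta) + \sum_{u\in\U\setminus\bar{\U}}\bar{r}$; this is slightly loose, since $\bar{\U}$ depends on $\Theta$ and the decomposition is an identity only at a fixed point, not an identity of functions of $\Theta$. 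Your pointwise secant argument --- choose $b^\star$ with $\theta^1_{b^\star}\ne\theta^2_{b^\star}$, take a single witness $u^\star$ unsaturated at $\Theta^\alpha$, and chain $\hat{R}_{u^\star}(\Theta^\alpha)=\hat{r}_{u^\star}(\Theta^\alpha)>\alpha\hat{r}_{u^\star}(\Theta^1)+(1-\alpha)\hat{r}_{u^\star}(\Theta^2)\ge\alpha\hat{R}_{u^\star}(\Theta^1)+(1-\alpha)\hat{R}_{u^\star}(\Theta^2)$ using $\hat{R}_u\le\hat{r}_u$ pointwise --- sidesteps that issue entirely, and needs strictness only in the one coordinate that actually moves, via Lemma \ref{th:one}. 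The one caveat both proofs share is the implicit reading that the hypothesis holds at every relevant $\Theta$ (in your case at each interpolated point $\Theta^\alpha$); that ambiguity is inherited from the statement of the corollary itself and is not a defect introduced by your argument.
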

\begin{proof}
Let $\bar{\U}\subset \U$ denote the set of users with throughput less than $\bar{r}$.   When each base station has at least one user with throughput less that $\bar{r}$ then $\B=\cup_{u\in\bar{\U}} b(u)$.   Now $\sum_{u\in\bar{\U}}\hat{R}_u(\Theta)=\sum_{u\in\bar{\U}}\hat{r}_u(\Theta)$ is strictly concave in $\Theta$ by Theorem \ref{th:three}.  It then follows immediately that $\sum_{u\in\U}\hat{R}_u(\Theta)=\sum_{u\in\bar{\U}}\hat{R}_u(\Theta) + \sum_{u\in U\setminus\bar{\U}}\bar{r}$ is strictly concave in $\Theta$.
\end{proof}

\subsection{Convex Optimisation}
\label{subsec:kkt conditions_main}

The objective function in optimisation problem (P1) is concave in $\Theta$ (since $U(\cdot)$ is concave increasing and $\hat{R}_{u}(\Theta)$ is concave by Theorem \ref{th:cor1}, then $U(\hat{R}_{u}(\Theta))$ is concave) and constraints (\ref{eq:constraint1})-(\ref{eq:constraint2}) are linear (and so convex).  Hence, optimisation problem (P1) is convex.    It follows immediately that a solution exists.    The Slater condition is satisfied and so strong duality holds.  

\subsection{Difficulty of Using Conventional Dual Algorithms}

The Lagrangian is
\begin{IEEEeqnarray}{rCl}\label{eq:lagrangian}
L(\Theta,\Lambda) &= -\sum_{u\in\U} U(\hat{R}_u(\Theta))+\sum_{u\in\U}\lambda_u^1(\underline{r}-\hat{R}_u(\Theta))  \notag \\
&+ \sum_{b\in\B}\lambda^2_b (\underline{\theta}-\theta_b) +  \sum_{b\in\B}\lambda^3_b (\theta_b-\bar{\theta})
\end{IEEEeqnarray}
where $\Lambda$ denotes the set of multiplers $\lambda_u^1$, $\lambda_b^2$, $\lambda_b^3$, $u\in\U$, $b\in\B$.  The dual function is $g(\Lambda):= L(\Theta^*(\Lambda),\Lambda)$, where $\Theta^*(\Lambda)=\arg \max_{\Theta} L(\Theta,\Lambda)$.   The main KKT conditions are $dL(\Theta,\Lambda)/d\theta_b=0$, $b \in \B$.  That is,
\begin{IEEEeqnarray}{c}\label{eq:kkt2}
\sum_{u\in\U}\left(1 +\lambda_u^1\right)\partial_{\theta_b} U(\hat{R}_u(\Theta))= \lambda_b^3- \lambda_b^2, \quad b \in \B
\end{IEEEeqnarray}
Given $\Lambda$, we can use (\ref{eq:kkt2}) to find  $\Theta^*(\Lambda)$.    The optimal vector of multipliers is $\Lambda^* = \arg \max_{\Lambda \ge 0} g(\Lambda)$.  Since $g(\Lambda)$ is concave, a standard dual function approach is to find $\Lambda^*$ using subgradient ascent techniques, and then find the optimal tilt angle $\Theta^*(\Lambda^*)$.    However, solving (\ref{eq:kkt2}) to obtain the primal variables is tricky in general since it imposes complex, implicit dual constraints for a solution to exist.   Consequently, the dual subgradient approach is unattractive for solving problem (P1).

 \subsection{Distributed Algorithm for Finding Optimal Solution}\label{sec:algo}
We consider the following primal-dual algorithm:
\begin{algorithm}[h!]
Initialise: $t=0$, $\Theta(0)$, $\Lambda(0)$, step size $\alpha>0$\\
\textbf{do}\\
\begin{minipage}[c]{\columnwidth}
\begin{align}
\theta_b(t+1) &= \theta_b(t) - \alpha \partial_{\theta_b}L(\Theta(t),\Lambda(t)),\quad \quad \  \theta_b\in\B \label{eq:dyn01}\\
\lambda_u^1(t+1) &= \left[\lambda_u^1(t) + \alpha \partial_{\lambda_u^1}L(\Theta(t),\Lambda(t))\right]^+,\  u\in\U \label{eq:dyn01a}\\
\lambda_b^i(t+1) &= \left[\lambda_b^i(t) + \alpha \partial_{\lambda_b^i}L(\Theta(t),\Lambda(t))\right]^+,\ b\in\B, i=2,3\IEEEeqnarraynumspace \label{eq:dyn02}\\
t\leftarrow t+1 \nonumber
\end{align}
\end{minipage}
\textbf{loop}
\caption{High SINR }\label{algo1}
\end{algorithm}

\noindent where in Algorithm \ref{algo1} projection $[z]^+$ equals $z$ when $z\ge0$ and $0$ otherwise, 
\begin{align}
\partial_{\theta_b}L(\Theta,\Lambda) &= -\sum_{u\in\U}\left(1 +\lambda_u^1\right)\partial_{\theta_b}  U(\hat{R}_u(\Theta)) - \lambda^2_b +  \lambda^3_b \label{eq:subgrad01}\\
\partial_{\lambda_u^1}L(\Theta,\Lambda)&= \underline{r}-\hat{R}_u(\Theta) \label{eq:subgrad01a}\\
\partial_{\lambda_b^2}L(\Theta,\Lambda)&= \underline{\theta} - \theta_b\\
\partial_{\lambda_b^3}L(\Theta,\Lambda)&=\theta_b - \bar{\theta}
\label{eq:subgrad02}
\end{align}
and $\partial_{\theta_b}  U(\hat{R}_u(\Theta))$ denotes any subgradient of $U(\hat{R}_u(\Theta))$ with respect to $\theta_b$.  

Observe that each iteration (\ref{eq:dyn01})-(\ref{eq:dyn02}) of Algorithm \ref{algo1} simultaneously updates both the primal variable $\Theta$ and the multipliers $\lambda_u^1$, $\lambda_b^2$, $\lambda_b^3$.   It possesses the following convergence property:
\begin{lemma}\label{th:converge}
For Algorithm 1 suppose  $(\Theta(t),\Lambda(t))$ is bounded for all $t$.   Then there exists constant $M\ge 0$ such that
\begin{align*}
0&\le &\frac{1}{t}\sum_{\tau=0}^t \Big( L(\Theta(\tau),\Lambda^*) - L(\Theta^*,\Lambda(\tau)) \Big) \leq   \frac{\Delta}{2\alpha t} + \frac{\alpha M}{2}
\end{align*}
where $(\Theta^*,\Lambda^*)$ is a solution to optimisation problem (P1), $\Delta=|| \Theta(0)-\Theta^*)||_2^2 +||\Lambda(0)-\Lambda^*||_2^2$ and $||\cdot ||_2$ denotes the usual Euclidean norm.   
\end{lemma}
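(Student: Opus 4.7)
The plan is to carry out the standard saddle-point analysis for primal-dual subgradient methods. The left inequality is immediate from the saddle point property of $(\Theta^*,\Lambda^*)$: since $\Theta^*$ minimises $L(\cdot,\Lambda^*)$ and $\Lambda^*$ maximises $L(\Theta^*,\cdot)$ over $\Lambda\ge 0$, we have $L(\Theta(\tau),\Lambda^*)\ge L(\Theta^*,\Lambda^*)\ge L(\Theta^*,\Lambda(\tau))$ for every $\tau$, so each summand is non-negative. For the upper bound, the natural Lyapunov function is $V(t):=\|\Theta(t)-\Theta^*\|_2^2+\|\Lambda(t)-\Lambda^*\|_2^2$, and the target is a one-step inequality of the form $V(t+1)\le V(t)-2\alpha\bigl(L(\Theta(t),\Lambda^*)-L(\Theta^*,\Lambda(t))\bigr)+\alpha^2 M$ which, once telescoped, immediately yields the claim.

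To obtain this one-step inequality, I would expand $\|\Theta(t+1)-\Theta^*\|_2^2$ using the primal update (\ref{eq:dyn01}) to pick up the cross term $-2\alpha\langle\partial_\Theta L,\Theta(t)-\Theta^*\rangle$ together with the squared subgradient norm, and expand $\|\Lambda(t+1)-\Lambda^*\|_2^2$ using (\ref{eq:dyn01a})--(\ref{eq:dyn02}), exploiting non-expansiveness of projection onto the non-negative orthant (valid because $\Lambda^*\ge 0$); this contributes a cross term $+2\alpha\langle\partial_\Lambda L,\Lambda(t)-\Lambda^*\rangle$ plus another squared subgradient norm. Convexity of $L$ in $\Theta$ (which holds because $-U(\hat R_u(\Theta))$ is convex by Theorem~\ref{th:cor1} and the remaining terms in (\ref{eq:lagrangian}) are affine in $\Theta$) gives $\langle\partial_\Theta L,\Theta(t)-\Theta^*\rangle\ge L(\Theta(t),\Lambda(t))-L(\Theta^*,\Lambda(t))$, while linearity of $L$ in $\Lambda$ gives the matching identity $\langle\partial_\Lambda L,\Lambda(t)-\Lambda^*\rangle = L(\Theta(t),\Lambda(t))-L(\Theta(t),\Lambda^*)$. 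Adding these two relations collapses the cross terms into the saddle-point gap $L(\Theta(t),\Lambda^*)-L(\Theta^*,\Lambda(t))$ with the correct sign.

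What remains is to absorb the two squared-subgradient terms into a single constant $M$. This is precisely where the boundedness hypothesis on $(\Theta(t),\Lambda(t))$ is used: on a bounded set, the subgradient components in (\ref{eq:subgrad01})--(\ref{eq:subgrad02}) are bounded in norm because $\hat R_u(\Theta)$ and $\partial_\Theta U(\hat R_u(\Theta))$ are continuous in $\Theta$ and the multiplier components enter linearly, so there exists a finite $M$ with $\|\partial_\Theta L(\Theta(t),\Lambda(t))\|_2^2+\|\partial_\Lambda L(\Theta(t),\Lambda(t))\|_2^2\le M$ for all $t$. Summing the one-step inequality from $\tau=0$ to $t$ gives $0\le V(t+1)\le V(0)-2\alpha\sum_{\tau=0}^{t}\bigl(L(\Theta(\tau),\Lambda^*)-L(\Theta^*,\Lambda(\tau))\bigr)+(t+1)\alpha^2 M$; rearranging, using $V(0)=\Delta$, and dividing by $2\alpha t$ delivers the stated bound (with the $(t+1)/t$ factor absorbed into the $\alpha M/2$ term). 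The main obstacle is really just the careful bookkeeping of the projection step and extracting the uniform constant $M$ from the boundedness assumption; once those are in place the argument is textbook.
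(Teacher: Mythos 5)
Your proposal is correct and follows essentially the same route as the paper: the paper's proof of Lemma~\ref{th:converge} simply invokes Lemma~\ref{th:main} in the Appendix, which is established via exactly the Lyapunov/saddle-point argument you describe (one-step descent inequality from expanding the squared distances, subgradient inequality in $\Theta$, linearity in $\Lambda$, non-expansive projection, telescoping, and bounding the squared-subgradient terms by $M$ using boundedness of the iterates). The only cosmetic difference is that the paper measures distance to the set of saddle points rather than to a fixed $(\Theta^*,\Lambda^*)$, which does not change the substance.
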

\begin{proof}
\textcolor{black}{Optimisation} problem $(P1)$ is convex, the objective and constraint functions are differentiable and the Slater condition is satisfied. The result now follows by direct application of Lemma \ref{th:main} in the Appendix.
\end{proof}

Since $\Delta/(2\alpha t)\rightarrow 0$ as $t\rightarrow \infty$, Lemma \ref{th:converge} tells us that update (\ref{eq:dyn01})-(\ref{eq:dyn02}) converges to a ball around an optimum  $(\Theta^*,\Lambda^*)$, the size of the ball decreasing with step size $\alpha$.  The size of the ball is measured in terms of metric $L(\Theta,\Lambda^*) - L(\Theta^*,\Lambda)$, and recall that by complementary slackness $L(\Theta^*,\Lambda^*)=\sum_{u\in \U} U( \hat{R}_{u}(\Theta^*))$.

\subsection{Message Passing and Implementation}\label{sec:impl}
Algorithm \ref{algo1} can be implemented in a distributed manner.  Namely,  each base station $b\in \B$ carries out local tilt angle updates according to (\ref{eq:dyn01}) and (\ref{eq:dyn02}), and also carries out update  (\ref{eq:dyn01a}) for each user $u$ associated with base station $b$.    For this, each base station $b$ needs to evaluate (\ref{eq:subgrad01})-(\ref{eq:subgrad02}).   Evidently (\ref{eq:subgrad01a})-(\ref{eq:subgrad02}) can be evaluated using locally available information (the tilt angle of base station $b$ and the current downlink throughput of user $u$ associated with base station $b$).   In contrast, evaluating (\ref{eq:subgrad01}) requires information sharing between base stations.   Specifically, it is necessary to evaluate
\begin{align}
\label{eq:utility-gradiant}
\sum_{u \in \U}\frac{\partial \hat{r}_{u}}{\partial \theta_{b}}
&=\sum_{u\in\{u\in\U: b(u)=b\}}\frac{\partial \hat{r}_{u}}{\partial \theta_{b}}+\sum_{u\in\{u\in\U: b(u)\ne b\}}\frac{\partial \hat{r}_{u}}{\partial \theta_{b}}
\end{align}
The first term in (\ref{eq:utility-gradiant}) is the sensitivity of the throughput of users associated to base station $b$ to changes in its tilt angle $\theta_b$.   This can either be directly measured by base station $b$ (by perturbing the tilt angle), or calculated using 
\begin{align}
\frac{\partial \hat{r}_{u}}{\partial \theta_{b}}
&=
\frac{w}{N_{sc}}\sum_{n=1}^{N_{sc}}  \frac{\partial G_{v}(\theta_{b},d_{b,u})}{\partial \theta_{b}}
\end{align}
where
\begin{IEEEeqnarray}{c}
 \frac{\partial G_{v}(\theta_{b},d_{b,u})}{\partial \theta_{b}}=\frac{2.4 \log10 }{ \theta_{3dB^{2}}}(\theta_{b(u),u}-\theta_{b})
\end{IEEEeqnarray}
This calculation requires knowledge of the pointing angle $\theta_{b(u),u}$ between base station $b$ and user $u$.   This pointing angle can be determined from knowledge of the location of users, information which is usually available to modern base stations since Location Based Services (LBS) are of high importance for mobile network providers. For example, in the US carriers are required by FCC to provide location-based information of the mobile users for E911 services and to within a specified accuracy \cite{E911}.  Within Release 9 of 3GPP a set of enhanced positioning  methods are standardized for LTE\cite{3GPPTS136.355V9}.

The second term in (\ref{eq:utility-gradiant}) is the sensitivity of the throughput of users associated to base stations other than $b$ to changes in tilt angle $\theta_b$.    This can be calculated as
\begin{align}
-\frac{1}{N_{sc}}\sum_{u\in\{u\in\U: b(u)\ne b\}}\sum_{n=1}^{N_{sc}}  \gamma_{u,n}(\Theta)\frac{\partial \hat{G}_{v}(\theta_{b},d_{b})}{\partial \theta_{b}} \frac{\hat{H}_{u}(\theta_{b}) } {H_u(\theta_{b(u)})}
\end{align}
This requires user received power $\hat{H}_{u}(\theta_{b})$ from base station $b$, user received power $H_u(\theta_{b(u)})$ from the base station to which it is associated, the user SINR $\gamma_{u,n}$ and the pointing angle $\theta_{b(u),u}$.   All of this information is available to the base station to which the user is associated (via user equipment received power and SINR reports), but not to neighbouring base stations and so must be communicated to them.   

We note that antenna tilt angle updates are likely to occur on a relatively long time-scale in practice.   Capturing hourly based traffic patterns of the mobile users may therefore also provide relatively reliable traffic distribution information, which might also be used. 

 \subsection{Example}
\label{subsec: PE-high SINR}
We illustrate the application of the foregoing high SINR analysis to the scenario shown in Fig ~\ref{fig:simulation scenario}.   We use a simple scenario here to help gain insight, with a more realistic setup considered in detail in Section \ref{sec:perf}.  The scenario consists of regularly spaced base stations each with three sector antennas.   The base station radio parameters are detailed in  Table \ref{tb:parameters} 
based on 3GPP standard \cite{3GPPTR36.814V9}
. The users are primarily located in two clusters, as indicated in Fig ~\ref{fig:simulation scenario}.   One cluster of 16 users is associated with the first sector of base station 1, and the other cluster of 16 users with the third sector of base station 2.  Clustering of users creates a challenging tilt angle assignment task since a poor choice of tilt angles will have a strong effect on network performance.  Additionally, two users are located close to the mid-point between these base stations.   Ensuring adequate coverage at cell edges is commonly an issue for network operators and so we expect a performance tradeoff between serving these edge users and serving users located in the clusters. 
For concreteness, we select utility function $U(z)=z$, so that optimisation problem (P1) corresponds to maximising the network sum-throughput, subject to every user obtaining a minimum throughput of $64kps$ and to physical constraints that the allowable tilt angles must lie in the interval $[5,20]$ degrees.
\begin{table}[ht]
\label{tb:parameters}
\caption{simulation parameters  }
\centering
\begin{tabular}{|l|l|l|}
\hline
\multicolumn{3}{|c|}{Simulation parameters} \\
\hline
\hline
\multirow{4}{*}{Site and Sector} & Inter-site distance & $500m$ \\
 & Number of sectors & $3$ \\
 & Antenna max gain  $\tilde{G}_0$ & $15dBi$ \\
 & Antenna height $h$ & $25m$\\
 & Vertical half power beamwidth $\theta_{3dB}$ & $10^\circ$ \\
 & Transceiver power $P_{b,n}$ & $46dBm$ \\ \hline \hline
\multirow{3}{*}{Channel} & Exponential path loss factor $\beta$ & $3.76$  \\
 & Fixed path loss factor $\rho_0$ & $0.0316$ \\
 & Bandwidth $w$  & $10MHz$ \\ 
& Number of sub-carriers $N_{sc}$ & $1$ \\
& UE noise power $\eta_{n,u}$ & $-94.97 dBm$ \\ \hline \hline 
\multirow{2}{*}{Optmisation} & Min tilt angle $\underline{\theta}$ &$5^\circ$ \\
& Max tilt angle $\bar{\theta}$ & $20^\circ$\\
& Min throughput $\underline{r}$ & $64kbps$ \\
& Max throughput $\bar{r}$ & $10Mbps$ \\
& Step size $\alpha$ & $0.05$ \\
\hline
\end{tabular}
\end{table}
\begin{figure}
\centering
\includegraphics[width=0.6\columnwidth]{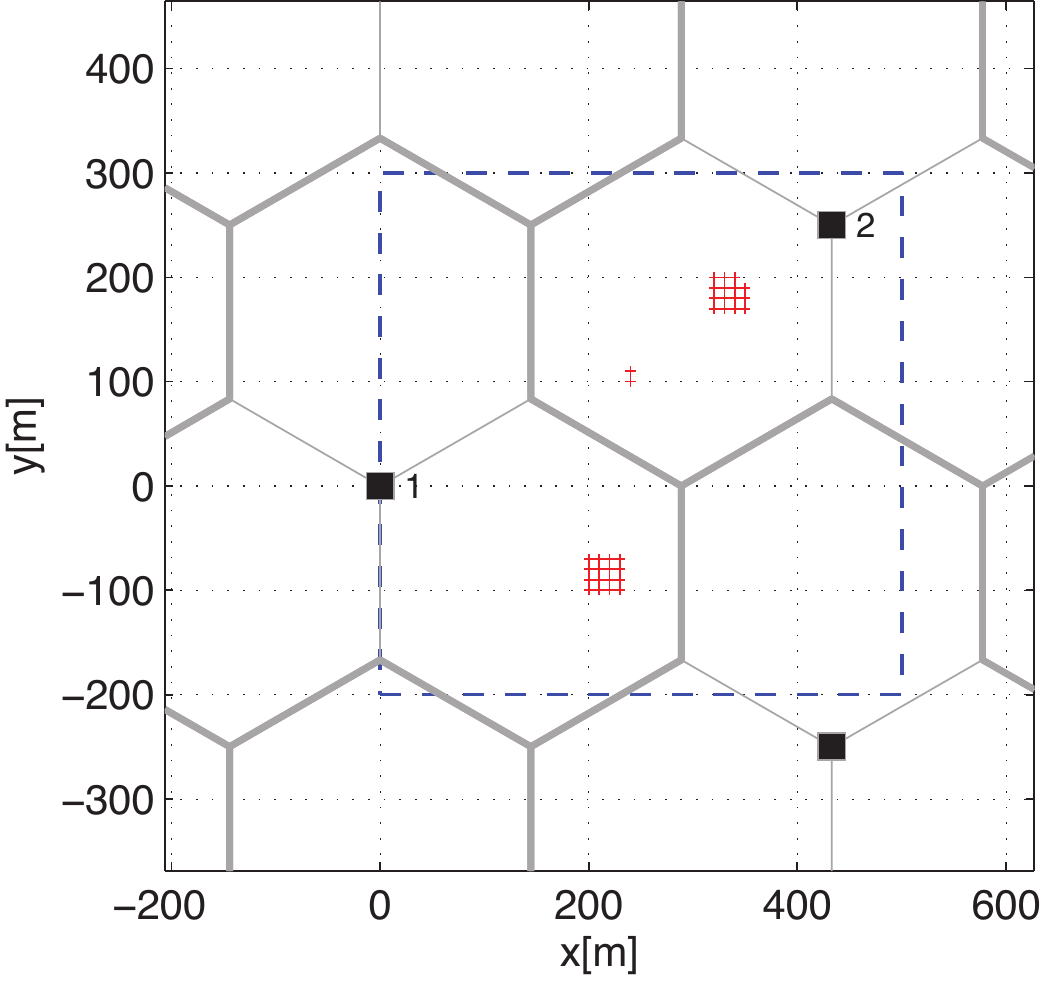}
\caption{Example network topology. Base-stations are indicated by solid squares labelled 1,2 and 3, UEs by dots.}
\label{fig:simulation scenario}
\end{figure}

Figure~\ref{fig:paremeters} shows tilt angle time histories for the two base stations when using Algorithm \ref{algo1}.   It can be seen that the tilt angles  converge to the optimum in less than 600 iterations.    Figure~\ref{fig:throughput} shows the corresponding network sum-throughput vs time.   Also shown in the network sum-throughput for fixed antenna angles of $8^\circ$.  Optimising the tilt angles increases the network sum throughput by almost factor of 
18 compared to the use of fixed angles.  As already noted,  the improvement is expected to be particularly pronounced in this simple example since the users are grouped into clusters, and so angling the antennas to point towards their respective clusters both greatly increases received power and decreases interference.    We note that significant performance gains are, however, also observed in the more realistic scenario studied in Section \ref{sec:perf} and this reflects the fundamental importance of antenna tilt angle to network performance.

\begin{figure}
\centering
\includegraphics[width=0.8\columnwidth]{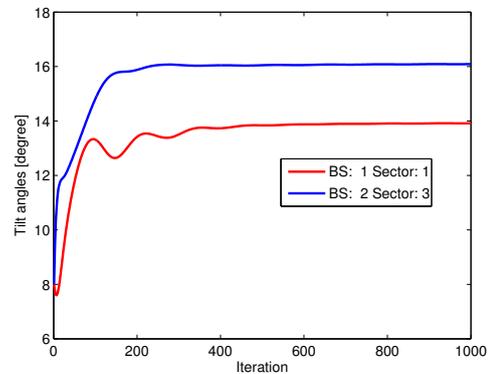}
\caption{ Tilt angle vs iteration number with Algorithm \ref{algo1}. }
\label{fig:paremeters}
\end{figure}
\begin{figure}
\centering
\includegraphics[width=0.8\columnwidth]{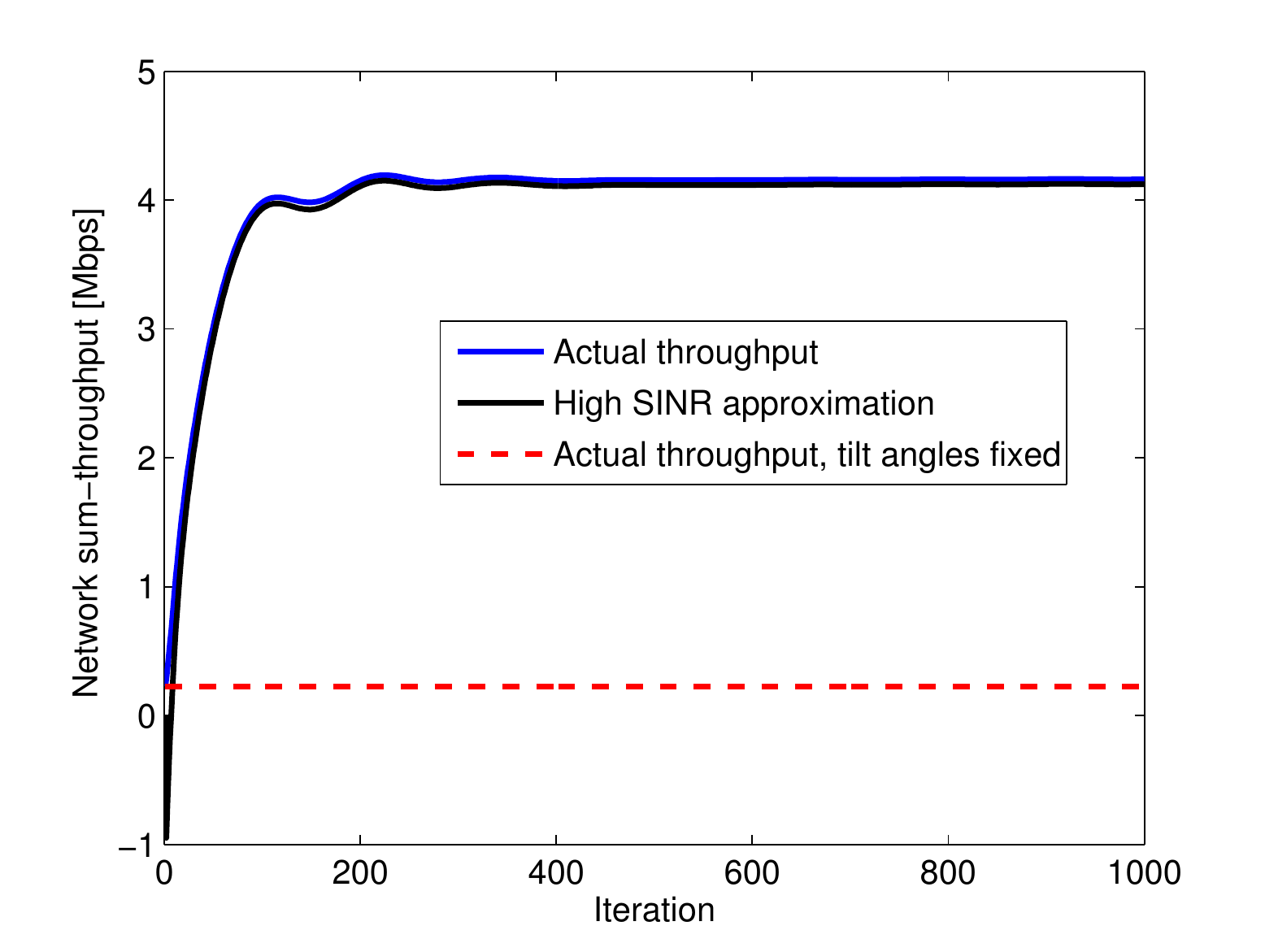}
\caption{ Normalised network sum-throughput vs iteration number with Algorithm \ref{algo1}.  Here, the network sum-throughput is normalised by dividing by the number of users in the network. 
}
\label{fig:throughput}
\end{figure}
\begin{figure}
\centering
\includegraphics[width=0.8\columnwidth]{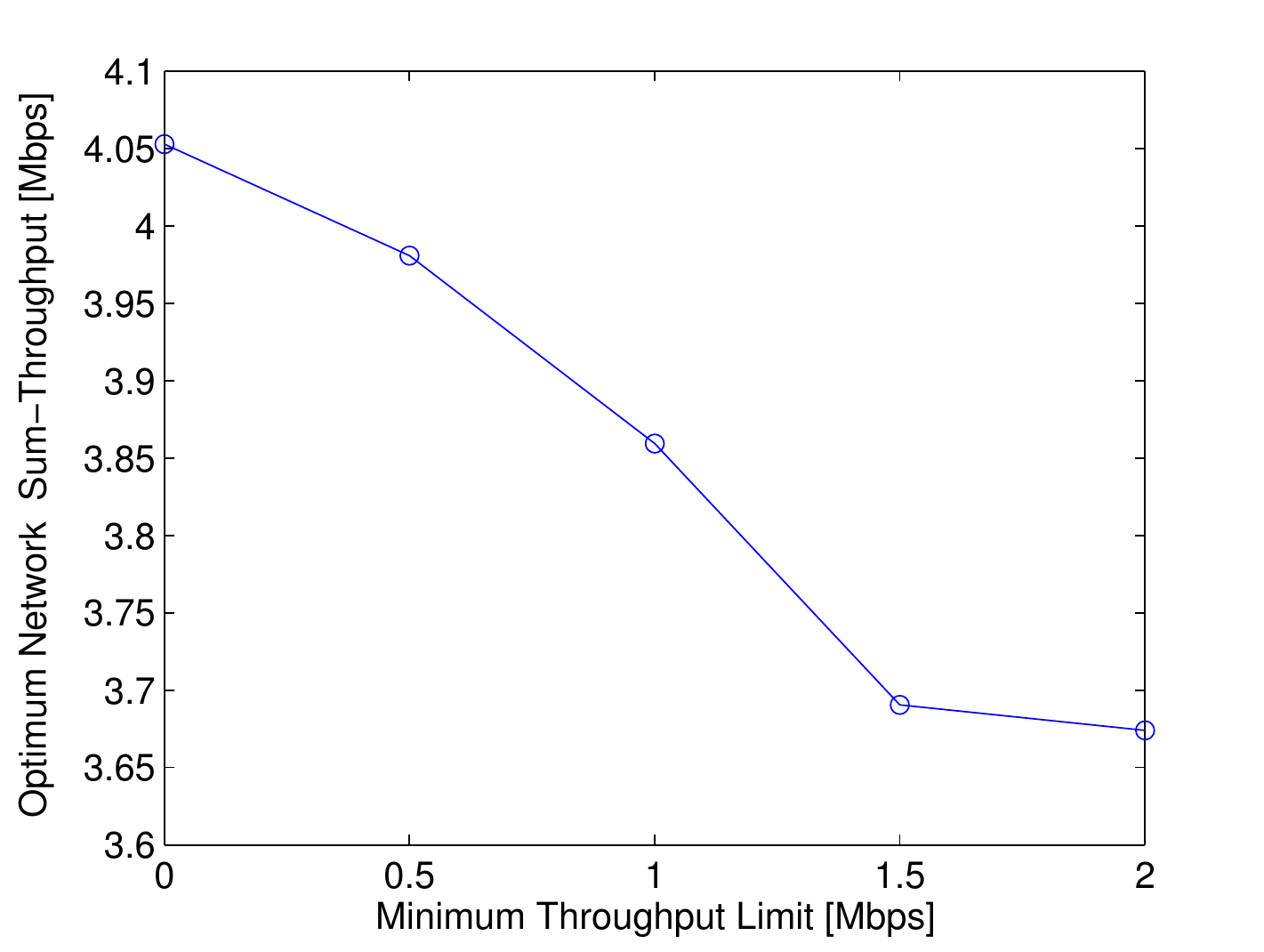}
\caption{ Normalised network sum-throughput as minimum throughput constraint $\underline{r}$ is varied.   The network sum-throughput is normalised by dividing by the number of users in the network. }
\label{fig:throughput-min}
\centering
\end{figure}

Fig \ref{fig:throughput-min} illustrates the impact of the minimum throughput constraint $\underline{r}$ on network sum-throughput.   It can be seen that as $\underline{r}$ is increased from zero to 2Mbps, the network sum-throughput decreases but that the impact is minor.   Note that as $\underline{r}$ is increased beyond 2Mbps the optimisation becomes infeasible as the stations at the cell edge are unable to support such high rates. 

As discussed in Section \ref{sec:impl}, UE location information is used when calculating  (\ref{eq:subgrad01}) in Algorithm \ref{algo1}.   In practice this location information will be approximate in nature.    Fig \ref{fig:location-error} plots the optimised network sum-throughput vs the standard deviation of the location error when zero-mean gaussian noise is added to the true user locations.    It can be seen that, as might be expected, the optimised sum-throughput falls as the noise level is increased.  However, the decrease is small (less than 5\%) even for relatively large location errors.

\begin{figure}
\centering
\includegraphics[width=0.8\columnwidth]{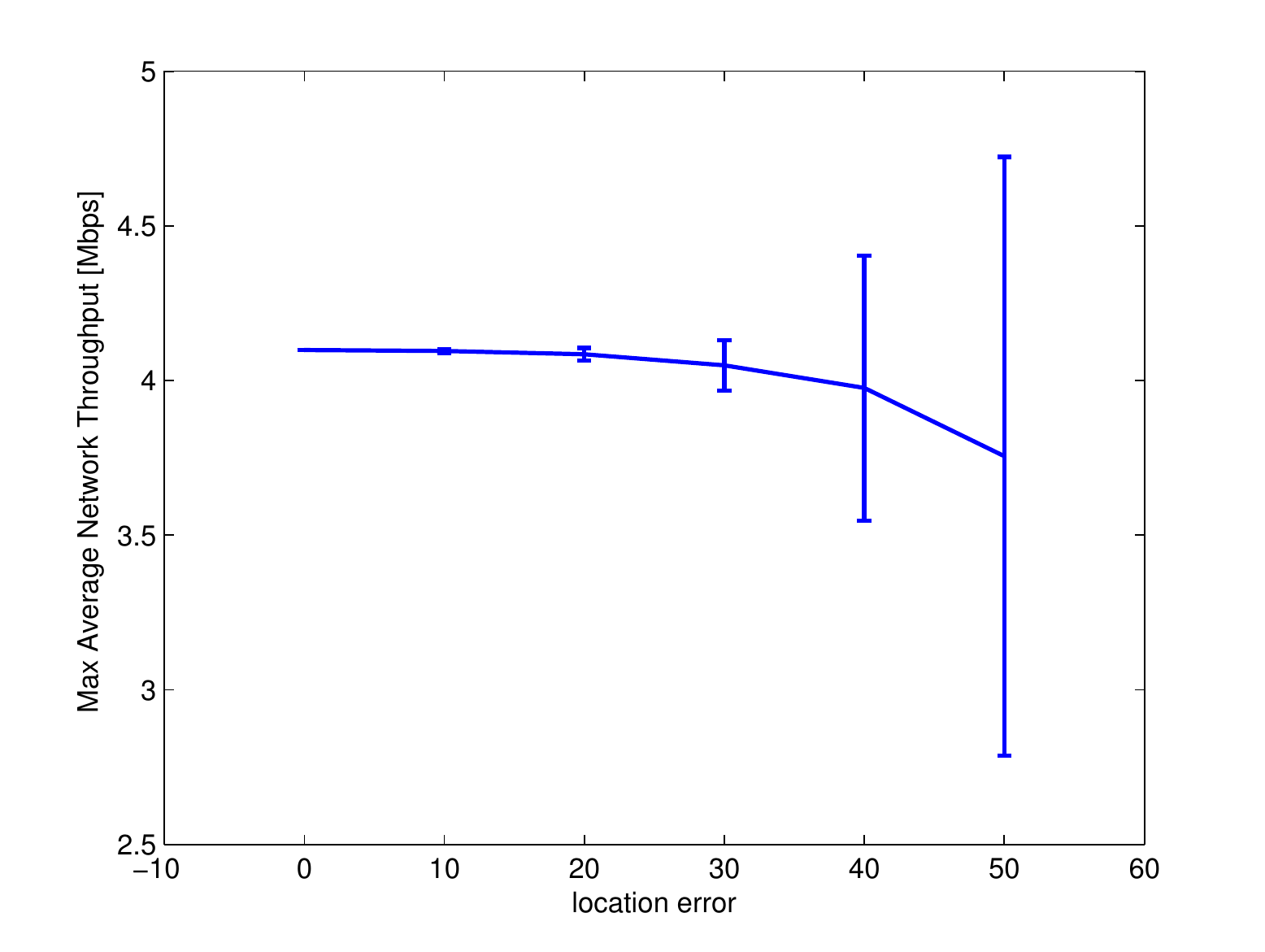}
\caption{ Optimised network sum-throughput vs. magnitude of location error, error bars indicate the standard deviation for \textcolor{black}{100} runs of simulation for each point.  The network sum-throughput is normalised by dividing by the number of users in the network.}
\label{fig:location-error}
\end{figure}
 \section{Any SINR: Proportional Fair Rate Allocation}
\label{sec:proportionalfair}

In this section we relax the assumption of operation in the high SINR regime.    However, this comes at the cost of restricting attention to proportional fair rate allocations.   We consider the following utility fair optimisation problem (P2):
\begin{align}
\label{eq:main-objective-log}
\max_{\Theta} & \sum_{u\in \U} \log  R_{u} (\Theta) \\ 
s.t.\quad & \underline{\theta} \le \theta_b \le \bar{\theta}, \quad b\in\B \label{eq:cons1a}\\
& \log\underline{r} \leq \log R_{u}(\Theta)  , \quad u \in \U \label{eq:cons1b}
\end{align}
where $R_u(\Theta)$ is given by (\ref{eq:capacity-truncated}).
\subsection{Convexity Properties}
We recall the following,
\begin{lemma}[\cite{papandriopoulos2008optimal}]{\label{th:four}} $h(x)=\log (\log(1+e^{x}))$ is concave and non-decreasing in $x\in\mathcal{R}$.
\end{lemma}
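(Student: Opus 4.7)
My plan is to establish both properties by direct differentiation, since $h$ is a simple scalar function of one variable and the desired inequalities reduce to something elementary.

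First, for monotonicity, let $p(x) := \log(1+e^x)$, so that $h(x) = \log p(x)$. A routine calculation gives $p'(x) = e^x/(1+e^x)$ and hence
\begin{equation*}
h'(x) = \frac{p'(x)}{p(x)} = \frac{e^x}{(1+e^x)\,\log(1+e^x)}.
\end{equation*}
For all $x\in\mathbb{R}$ we have $e^x>0$, $1+e^x>1$, and $\log(1+e^x)>0$, so $h'(x)>0$, which gives the non-decreasing property.

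Next, for concavity, I would differentiate once more. Using $p''(x) = e^x/(1+e^x)^2$, the quotient rule yields
\begin{equation*}
h''(x) = \frac{p''(x)\,p(x) - p'(x)^2}{p(x)^2},
\end{equation*}
so $h''(x)\le 0$ is equivalent to $p''(x)\,p(x)\le p'(x)^2$. Substituting the expressions for $p,p',p''$ and clearing the common denominator $(1+e^x)^2$, this becomes
\begin{equation*}
e^x\,\log(1+e^x)\le e^{2x},
\end{equation*}
or, with $u := e^x>0$, simply $\log(1+u)\le u$. This last inequality is a standard consequence of the strict convexity of the exponential (or of the first-order Taylor expansion of $\log(1+u)$ about $u=0$), and holds for all $u>-1$.

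The main obstacle, if any, is just the bookkeeping in computing $h''$; the key observation that makes everything collapse is the reduction to the elementary inequality $\log(1+u)\le u$. Combining the two steps gives $h'(x)>0$ and $h''(x)\le 0$ for every $x\in\mathbb{R}$, which is the claim of Lemma \ref{th:four}.
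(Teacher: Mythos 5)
Your proof is correct. Note, however, that the paper offers no proof of this statement at all: Lemma \ref{th:four} is simply quoted from the cited reference (Papandriopoulos et al.), so there is no in-paper argument to compare against. Your direct calculus verification is a sound, self-contained alternative: the derivative computations are right ($p'(x)=e^x/(1+e^x)$, $p''(x)=e^x/(1+e^x)^2$), the positivity of $h'$ is immediate since $\log(1+e^x)>0$ for all real $x$, and the reduction of $h''\le 0$ to the elementary inequality $\log(1+u)\le u$ for $u=e^x>0$ is exactly the right simplification. What your approach buys is independence from the external reference, at the cost of a little bookkeeping; within the logic of the paper the lemma functions purely as an imported building block for Lemma \ref{th:five}, so either route is acceptable.
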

Turning now to $R_u(\Theta)$, we begin by observing that
\begin{lemma} {\label{th:five}} $\log(r_{u}(\Theta))$ is concave in $\Theta$.
\end{lemma}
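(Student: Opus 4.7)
The plan is to reduce the statement to an application of Lemma~\ref{th:four}, by expressing $\log r_u(\Theta)$ in terms of the function $h(x)=\log\log(1+e^x)$ applied to a concave argument.

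First I would isolate the log-SINR. For each sub-carrier $n$, define $x_{u,n}(\Theta) := \log(\kappa\gamma_{u,n}(\Theta))$. From~(\ref{eq:dl-sinr}),
\begin{align*}
x_{u,n}(\Theta) = \log\kappa + \log H_u(\theta_{b(u)}) - \log\Big(\sum_{c\in\B\setminus\{b(u)\}} \hat H_u(\theta_c) + \eta_{u,n}\Big).
\end{align*}
Lemma~\ref{th:one} yields concavity of the signal term $\log H_u(\theta_{b(u)})$ in $\Theta$, while Lemma~\ref{th:two} gives convexity of the interference-plus-noise term, so that its negation is concave. The sum of two concave functions is concave, hence $x_{u,n}(\Theta)$ is concave in $\Theta$.

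Next I would apply the composition rule: since $h$ is concave and non-decreasing by Lemma~\ref{th:four}, and the composition of a concave non-decreasing function with a concave function is concave, the map $h(x_{u,n}(\Theta)) = \log\log(1+\kappa\gamma_{u,n}(\Theta))$ is concave in $\Theta$. Equivalently, each $\log(1+\kappa\gamma_{u,n}(\Theta))$ is log-concave in $\Theta$.

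The remaining, and most delicate, step is to pass from per-sub-carrier log-concavity to log-concavity of the average $r_u(\Theta)=\frac{w}{N_{sc}}\sum_{n}\log(1+\kappa\gamma_{u,n}(\Theta))$, because sums of log-concave functions are not log-concave in general. The clean way around this is to note that in the standard LTE model the thermal noise $\eta_{u,n}$ is taken uniform across sub-carriers, in which case the $\gamma_{u,n}(\Theta)$ collapse to a single $\gamma_u(\Theta)$ and $\log r_u(\Theta)=\log w + h(x_u(\Theta))$ is concave directly by the previous step. I expect this aggregation step to be the principal obstacle in any fully general treatment: one would need to exploit the shared signal-and-interference structure of the $x_{u,n}$ (they differ only through the additive noise constant $\eta_{u,n}$) beyond their individual concavity, whereas under the uniform-noise assumption the argument closes immediately via Lemma~\ref{th:four}.
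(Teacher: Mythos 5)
Your core argument follows the same route as the paper's: isolate the per-sub-carrier log-SINR $\hat{r}_{u,n}(\Theta)=\log(\kappa\gamma_{u,n}(\Theta))$, obtain its concavity in $\Theta$ from Lemmas~\ref{th:one} and~\ref{th:two}, and compose with the non-decreasing concave function $h$ of Lemma~\ref{th:four}. Where you differ is in the final aggregation step, and there your version is the more careful one. The paper's proof simply writes $\log(r_u(\Theta)) = \frac{w}{N_{sc}}\sum_{n} \log\bigl(\log(1+\kappa\gamma_{u,n}(\Theta))\bigr)$, i.e.\ it passes the outer logarithm through the sum over sub-carriers; this is not a valid identity (the logarithm of a sum is not the sum of logarithms), and without it one faces exactly the obstacle you name: a sum of log-concave functions need not be log-concave. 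Your resolution --- observing that $\gamma_{u,n}$ depends on $n$ only through $\eta_{u,n}$, so that under uniform per-sub-carrier noise the terms coincide and $\log r_u(\Theta)=\log w + h(x_u(\Theta))$ is concave directly by Lemma~\ref{th:four} --- legitimately closes the argument in the regime the paper actually works in (its simulations take $N_{sc}=1$). So: essentially the same approach, but you have correctly identified, and repaired under an explicit assumption, a step that the paper's own proof glosses over; for $N_{sc}>1$ with genuinely distinct $\eta_{u,n}$ the lemma is not established by either argument without exploiting the shared structure of the $\hat{r}_{u,n}$ beyond their individual concavity.
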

\begin{proof}
From (\ref{eq:user-capacity-shannon}) we have
\begin{align}
\log(r_{u}(\Theta))&=\frac{w}{N_{sc}}\sum_{n=1}^{N_{sc}}\log(\log(1+\kappa\gamma_{u,n}(\Theta))\\
&=\frac{w}{N_{sc}}\sum_{n=1}^{N_{sc}}\log(\log(1+ e^{\hat{r}_{u,n}(\Theta)}))
\end{align}
where $\hat{r}_{u,n}(\Theta)=\log(\kappa\gamma_{u,n}(\Theta))$.  That is, the mapping from vector $\Theta$ to $\log(r(\Theta))$ is the vector composition of $h(x)$ in Lemma \ref{th:five} and $\hat{r}_{u,n}(\Theta)$.  By Lemmas  \ref{th:one} and \ref{th:two}, $\hat{r}_{u,n}(\Theta)$ is concave in $\Theta$.   By\textcolor{black}{~\cite[p86]{boyd2004convex}}, the vector composition a non-decreasing concave function and a concave function is concave.
\end{proof}
and
\begin{theorem}\label{lem:six}
$\log  R_{u} (\Theta)$ is concave in $\Theta$.
\begin{proof}
From (\ref{eq:capacity-truncated}) we have 
\begin{align}
\log R_{u}(\Theta)&= \log\min\{\bar{r}, r_{u}(\Theta)\} \\
&\stackrel{(a)}{=}\min\{ \log\bar{r}, \log r_{u}(\Theta)\}
\end{align}
where $(a)$ follows from the fact that the $\log$ function is monotonically increasing.  By Lemma \ref{th:five},  $\log r_{u}(\Theta)$ is concave.  Since the $\min$ function is concave non-decreasing, when composed with  $\log r_{u}(\Theta)$ it is concave i.e.  $\log \hat{R}_u(\Theta)$ is concave in $\Theta$.  
\end{proof}
\end{theorem}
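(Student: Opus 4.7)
The plan is to reduce the claim about $\log R_u(\Theta)$ to the already-established concavity of $\log r_u(\Theta)$ from Lemma \ref{th:five} by pushing the outer logarithm inside the minimum. The key algebraic identity I would use is that, because the logarithm is monotonically increasing on $(0,\infty)$, for any positive $a,b$ one has $\log\min\{a,b\}=\min\{\log a,\log b\}$. Applying this to the definition $R_u(\Theta)=\min\{\bar{r},r_u(\Theta)\}$ in (\ref{eq:capacity-truncated}), I immediately rewrite $\log R_u(\Theta)=\min\{\log\bar{r},\log r_u(\Theta)\}$.

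Next I would invoke Lemma \ref{th:five} to conclude that $\log r_u(\Theta)$ is concave in $\Theta$, while $\log\bar{r}$ is just a constant (hence trivially concave). The pointwise minimum of finitely many concave functions is concave, so the right-hand side is concave in $\Theta$, completing the argument. No appeal to Lemma \ref{th:four} is needed here beyond what was already used inside Lemma \ref{th:five}.

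The only subtlety I anticipate is making sure the monotonicity-based identity $\log\min\{a,b\}=\min\{\log a,\log b\}$ is applied in a regime where both arguments are positive. Since $\bar{r}>0$ and $r_u(\Theta)>0$ (the throughput is a positive quantity for any feasible tilt configuration by (\ref{eq:user-capacity-shannon})), the logarithm is well-defined and strictly increasing on the relevant domain, so the exchange is legitimate. This is really the only place where one has to be careful; the rest of the proof is a one-line application of the standard fact that $\min$ of concave functions is concave.

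Overall the proof will be a short, two-step argument: (i) swap $\log$ and $\min$ using monotonicity, (ii) apply Lemma \ref{th:five} together with the concavity of the pointwise minimum. No nontrivial obstacle is expected; the heavy lifting has already been done in Lemmas \ref{th:one}, \ref{th:two}, \ref{th:four}, and \ref{th:five}.
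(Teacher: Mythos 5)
Your proposal is correct and follows essentially the same route as the paper: push the logarithm inside the minimum via monotonicity, invoke Lemma \ref{th:five} for the concavity of $\log r_u(\Theta)$, and conclude via the pointwise minimum of concave functions. Your justification of the last step (pointwise min of concave functions is concave) is in fact slightly cleaner than the paper's phrasing about composing with a ``concave non-decreasing'' min, and your remark on positivity of $r_u(\Theta)$ is a sensible, if minor, addition.
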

\subsection{Convex Optimisation}
It follows from Theorem \ref{lem:six} that the objective of optimisation problem (P2) is concave.   Constraints (\ref{eq:cons1a}) are linear (so convex).   The RHS of constraint (\ref{eq:cons1b}) is concave, again by Theorem \ref{lem:six}, and so this constraint is convex.  It follows that optimisation problem (P2) is convex and a solution exists.  
\subsection{Distributed Algorithm}
The Slater condition is satisfied and strong duality holds.   We can therefore apply a similar approach as in Section \ref{sec:algo} to develop a distributed algorithm for finding the optimal antenna tilt angles.

The Lagrangian is:
\begin{IEEEeqnarray}{rCl}
L(\Theta,\Lambda) &=& -\sum_{u\in\U} \log R_u(\Theta)+\sum_{u\in\U}\lambda_u^1(\log \underline{r}-\log R_u(\Theta))  \nonumber \\
&& \: + \sum_{b\in\B}\lambda^2_b (\underline{\theta}-\theta_b) +  \sum_{b\in\B}\lambda^3_b (\theta_b-\bar{\theta})
\end{IEEEeqnarray}
We can now apply Algorithm \ref{algo1} to solve (P2) provided we use the appropriate gradients:
\begin{IEEEeqnarray}{rCl}
\partial_{\theta_b}L(\Theta,\Lambda) &=& -\sum_{u\in\U}\left(1 +\lambda_u^1\right)\partial_{\theta_b}  (\log{R}_u(\Theta)) - \lambda^2_b +  \lambda^3_b \label{eq:logsubgrad01} \IEEEeqnarraynumspace \\
\partial_{\lambda_u^1}L(\Theta,\Lambda)&=& \log \underline{r}-\log R_u(\Theta)\\
\partial_{\lambda_b^2}L(\Theta,\Lambda)&=& \underline{\theta} - \theta_b\\
\partial_{\lambda_b^3}L(\Theta,\Lambda)&=&\theta_b - \bar{\theta}
\label{eq:logsubgrad02}
\end{IEEEeqnarray}
with\newline
\begin{minipage}[c]{0.9\columnwidth}
\footnotesize
\begin{align*}
\frac{\partial r_{u}}{\partial \theta_{b}}&=
\begin{cases}
\frac{\partial G_{v}(\theta_{b},d_{b,u})}{\partial \theta_{b}}
\frac{\kappa H_{u}(\theta_{b(u)})}{\kappa H_{u}(\theta_{b(u)})+I_{u \in b(u)}}  
& b=b(u) \\ 
\frac{\partial \hat{G}_{v}(\theta_{b},d_{b,u})}{\partial \theta_{b}}
\bigg( 
\frac{\hat{H}_{u}(\theta_{b(u)})} {\kappa H_{u}(\theta_{c})+I_{u \in \B \setminus\{b(u)\} } }-\frac{\hat{H}_{u}(\theta_{b(\acute{u})})} {I_{u \in \B \setminus\{b(u)\}}}
 \bigg)  
 & b\ne b(u) \\ 
\end{cases}
\end{align*}
\end{minipage}
and
\begin{IEEEeqnarray}{c}
\label{eq:interference-and-noise-u}
I_{u \in b(u)} =\sum_{c \in \B\setminus\{b(u)\}}  \hat{H}_{u}(\theta_{c})+\eta_{u,n} 
\end{IEEEeqnarray}

 \subsection{Example}
We revisit the example in section~\ref{subsec: PE-high SINR}.   Fig \ref{fig:throughput_two_regimes} compares the results for optimisation problems $(P1)$ and $(P2)$.  Fig \ref{fig:throughput_two_regimes}(a) shows the sum-throughput from Fig \ref{fig:throughput} and when solving proportional fair allocation problem $(P2)$.  As expected, the sum throughput is lower for the proportional fair allocation.  Fig \ref{fig:throughput_two_regimes}(b) compares the sum-log-throughput.  As expected, the sum-log-throughput is higher for the proportional fair allocation problem $(P2)$.   Fig \ref{fig:throughput_dis_two_regimes} shows detail of the throughputs assigned to individual users to maximise sum-throughput and for proportional \textcolor{black}{fairness}.   It can be seen that the throughput assignments are broadly similar in both cases, with the primary difference being the throughputs assigned to the two users located at the cell edge (numbered 33 and 34 in Fig \ref{fig:throughput_dis_two_regimes}). The proportional fair allocation assigns significantly higher rate to these edge stations than does the max sum-throughput allocation.
\begin{figure}
\centering
\includegraphics[width=1\columnwidth]{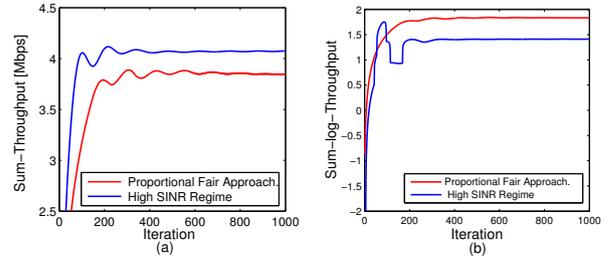}
\caption{(a) Comparing normalised sum-throughput from Fig \ref{fig:throughput} and when solving proportional fair allocation problem $(P2)$ -- as expected, the sum throughput is lower for the proportional fair allocation.  (b) Comparing normalised sum-log-throughput -- as expected, the sum-log-throughput is higher for the proportional fair allocation.  In all cases the network throughput is normalised by dividing by the number of users in the network. }
\label{fig:throughput_two_regimes}
\end{figure}
\begin{figure}
\centering
\includegraphics[width=0.9\columnwidth]{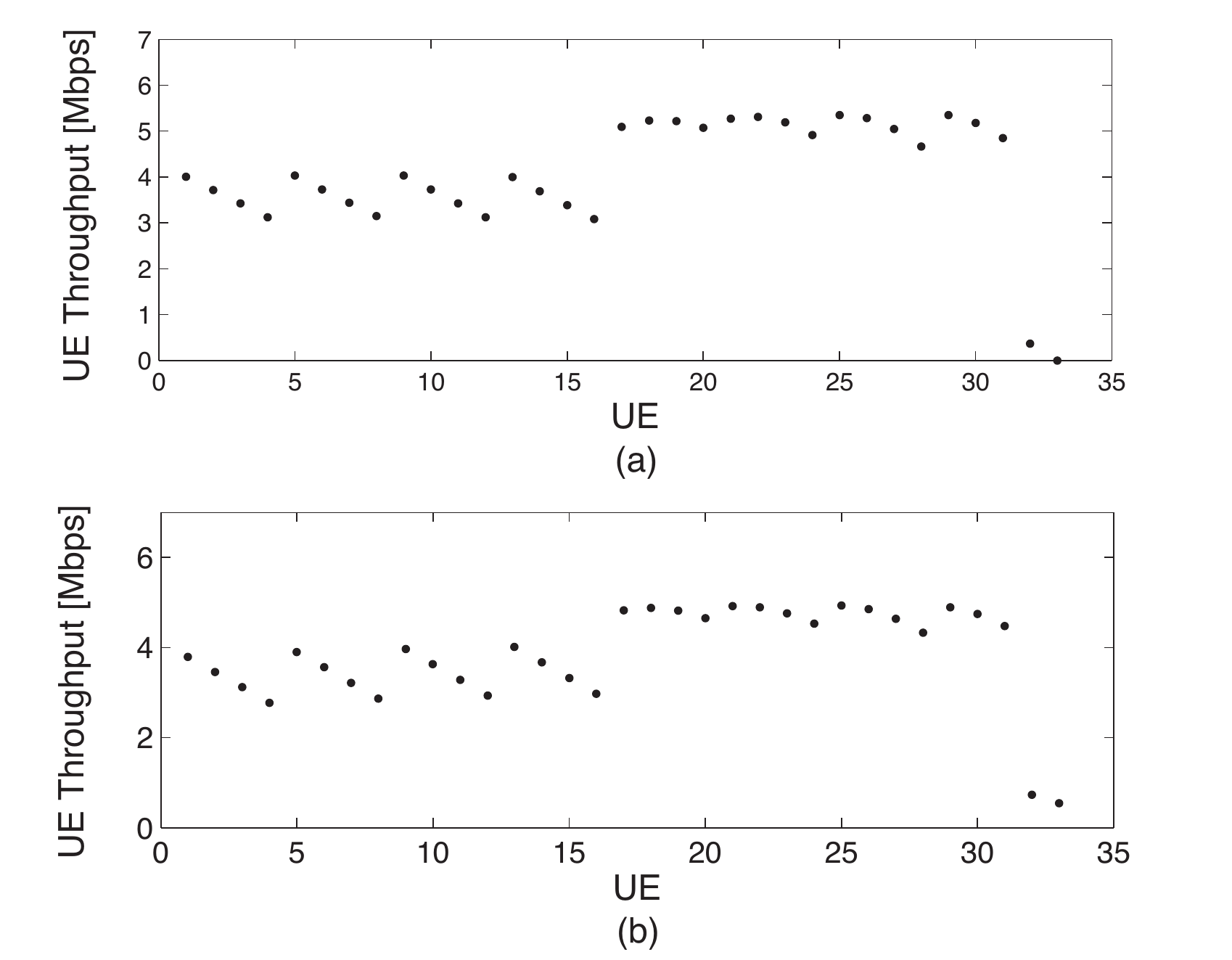}
\caption{User throughput assignments for (a) sum-throughput maximisation in the high SINR regime and (b) proportional fair rate allocation.}
\label{fig:throughput_dis_two_regimes}
\end{figure}

\section{Performance Evaluation}\label{sec:perf}
\begin{figure}
\centering
\hfil
\subfloat[Street map]{
\includegraphics[width=1.2in]{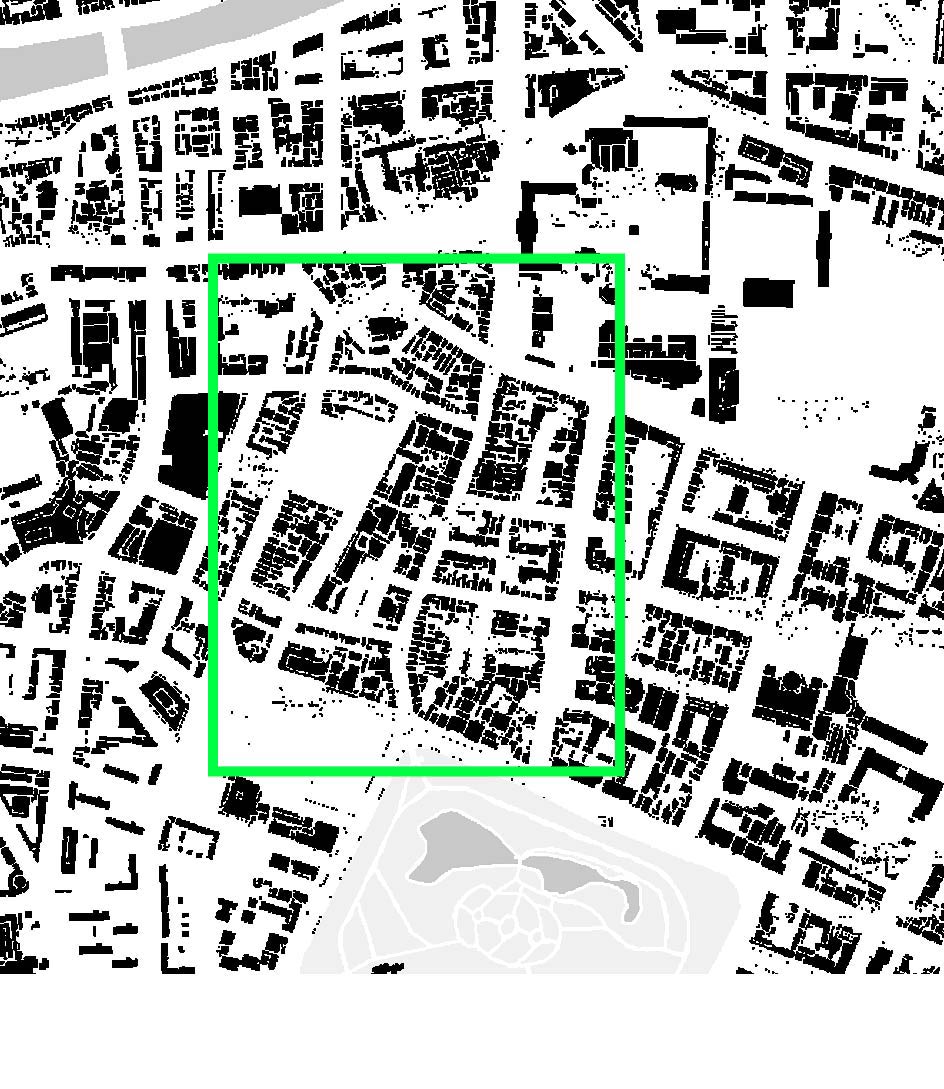}
}
\label{fig:dublin_building}
\subfloat[User and base-station locations. ]{\includegraphics[width=2in]{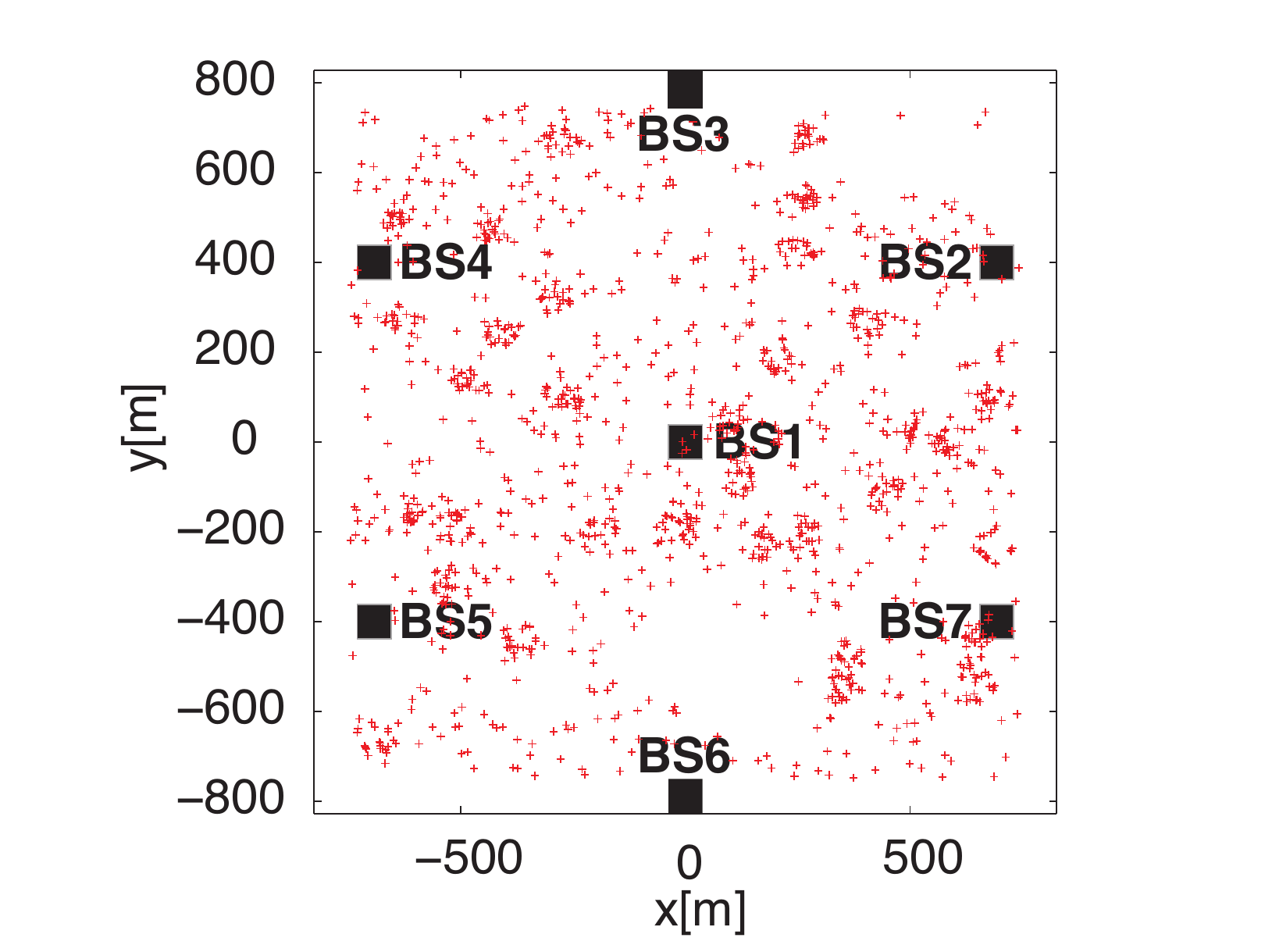}
\label{fig:user_dist}}
\caption{Dublin, Ireland example.}
\label{fig:dublin_sim}
\end{figure}
In this section we consider a realistic example based on data from the cellular network covering Grafton Street and Dawson Street in downtown Dublin, Ireland, see Fig \ref{fig:dublin_building}.   These are major shopping streets close to the centre of Dublin city, with a large number of cellular users.    We consider a section of the network with 21 sectors in  a $1500 m \times 1500m $ area and with an inter-site distance of $800m$.  Environmental characteristics are derived from experimental measurement data with a combination of non-line of sight and line-of sight paths.   Path loss and log-normal shadow fading parameters are derived from \cite{3GPPTR36.814V9} for macro urban scenarios and detailed Table \ref{tb:parameters_dublin}.   There are 1350 users, with locations as shown in Fig \ref{fig:user_dist}.  We focus on the performance experienced by the 388 users associated with the centre base station (indicated by BS1 in Fig \ref{fig:user_dist}).
\begin{table}[ht]
\label{tb:parameters_dublin}
\caption{Dublin scenario simulation parameters}
\centering
\begin{tabular}{|l|l|l|}
\hline
\multicolumn{3}{|c|}{Dublin Scenario Simulation Parameters} \\
\hline
\hline
\multirow{1}{*}{Site and Sector} & Inter-site distance & $800m$ \\
\hline
\hline
\multirow{3}{*}{Channel} & NLOS exponential path loss factor & $3.9$ \\
& NLOS fixed path loss factor  & $10^{-2.1}$ \\
& LOS exponential path loss factor & $2.2$ \\
& LOS fixed path loss factor & $10^{-3.4}$ \\
& Shadow fading standard deviation  & $6$ \\
& Shadow fading mean  & $0$ \\
\hline \hline 
\multirow{1}{*}{Optimisation} & Step size $\alpha$ & $0.01$ \\
\hline
\end{tabular}
\end{table}
Figures \ref{fig:sim} shows the proportional fair rate allocation.  For comparison, results are also shown when a fixed tilt angle of $8^\circ$ is used.    It can be seen from Fig \ref{fig:sim}(a) that the sum-log-throughput objective function is improved by 22\% by tilt angle optimisation, and that Algorithm \ref{algo1} converges rapidly to the optimal allocation.   From the cumulative distribution fucntion (CDF) in Fig \ref{fig:sim}(b) it can be seen the user throughputs are also significantly increased, with the median throughput increased by almost a factor of 4 compared to use of fixed angles.
\begin{figure}
\subfloat[Normalised Sum-log-throughput]{
\includegraphics[width=0.5\columnwidth]{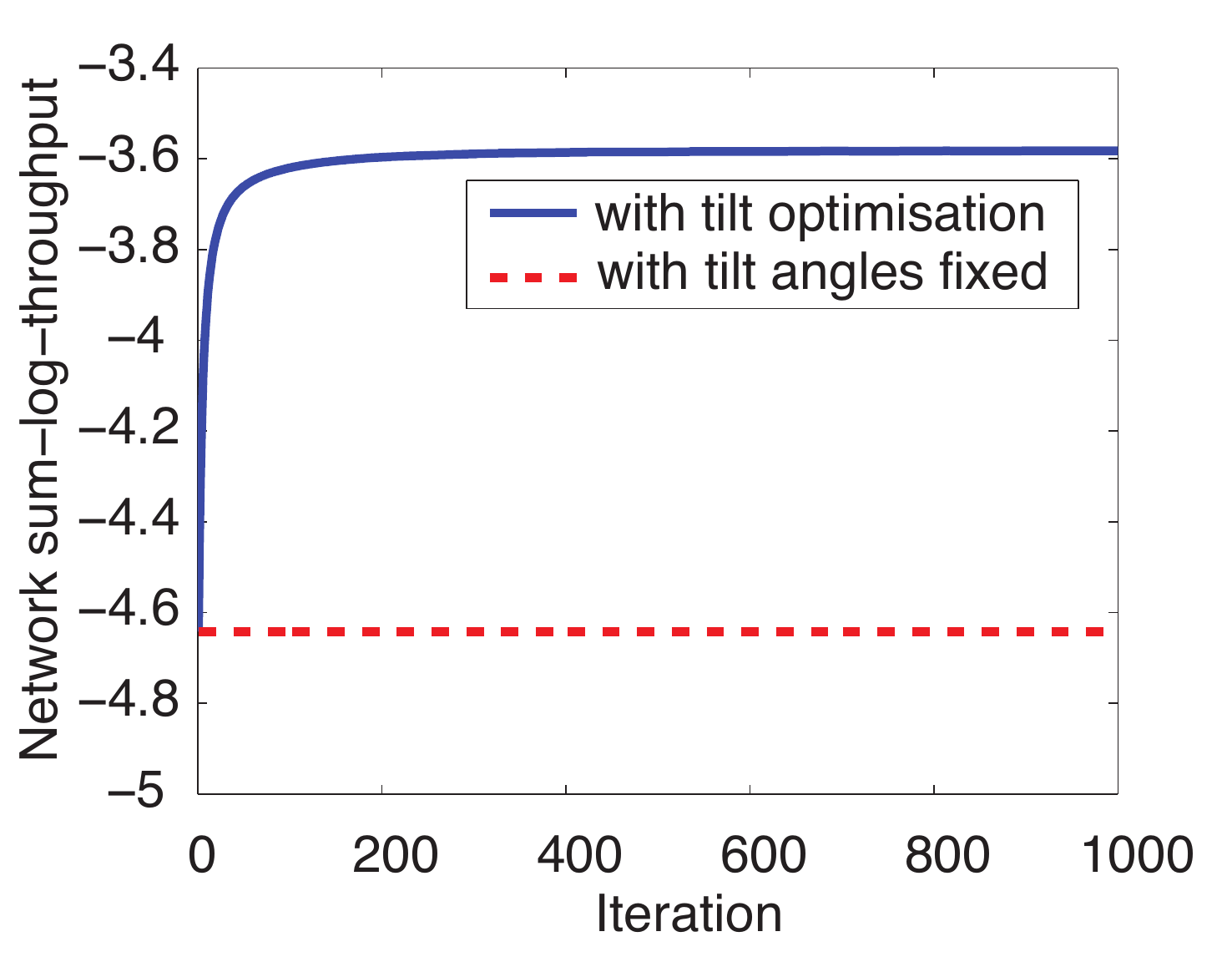}\label{fig:dublin_throughput_log}
}
\subfloat[CDF of user throughputs.]{
\includegraphics[width=0.5\columnwidth]{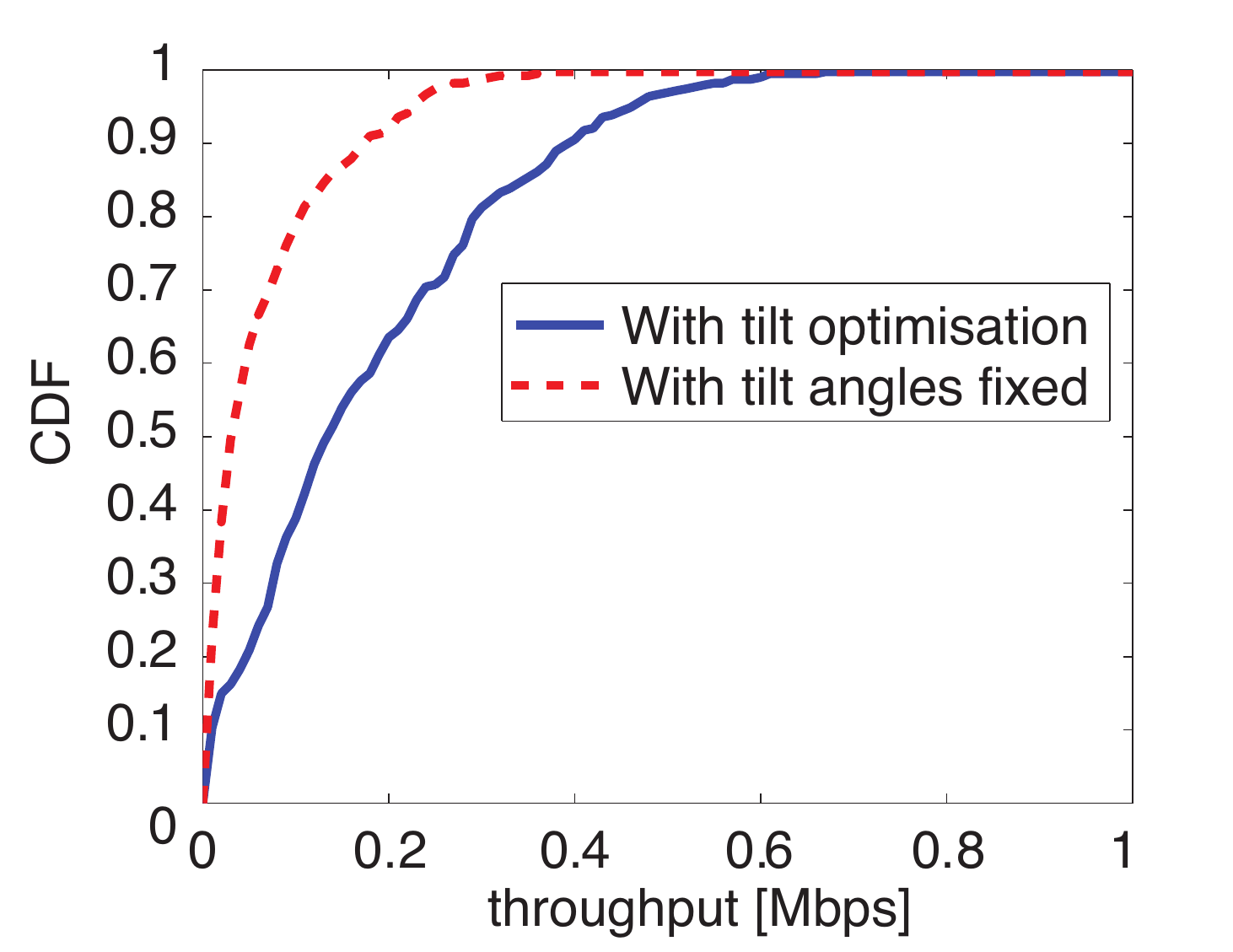}\label{fig:dublin_cdf}
}
\caption{Proportional fair rate allocation, Dublin example.  For comparison, results are also shown when a fixed tilt angle of $8^\circ$ is used (indicated by dashed lines). }
\label{fig:sim}
\end{figure}
%
\color{black} \section{LTE SIMO Links and MMSE Post-Processing}
In this section we extend the performance evaluation to consider LTE SIMO links with one transmit antenna on the BS and two receive antennas at the UE.    The presence of two antennas at the receiver allows the UE to cancel one interferer.  Hence, if interference is dominated by a single transmitter then we expect the use of SIMO links will allow inter-cell interference to be significantly reduced.   Our interest here is in the impact that this interference cancellation has on the size of throughput gain achievable by tilt angle adjustment.

We consider a SIMO link with linear Minimum Mean Square Error (LMMSE) post processing applied to the received signal to mitigate neighbouring cell interference.  Defining channel vector  $\textbf k_{u}=[k_{1}\ k_{2}]^T$, the channel gain for user $u$ is:
\begin{IEEEeqnarray}{c}\label{eq:v-mat} 
\textbf{k}_{u}=\sqrt{\frac{H_{u}(\theta_{b})}{P}}\sqrt{10^{S_{b,u}/10}}\textbf{q}_{b,u}.
\end{IEEEeqnarray}
where $S$ is a zero mean Gaussian random variable representing slow fading effects, $\textbf{q}_{b,u}$ is a Rayleigh flat fading vector and $P$ is the power of the transmitted signal assuming all base stations transmit at $P=P_{b,u}$.   We can consider the elements of  $\textbf{q}_{b,u}$ to be independent complex random Gaussian processes corresponding to the channels of base station $b$ and user $u$, provided that the antenna elements are sufficiently separated (typically on  the order of half a wavelength apart).  We identify the inter-cell interference vector  $\textbf{v}_{u}=[v_{1}\ v_{2}]^T$ for user $u$ by the strongest interferer:
\begin{IEEEeqnarray}{c}\label{eq:max_interference}
\textbf{v}_{u}=\sqrt{\max_{c \in \B \backslash \{b(u)\}} \{\hat{H}_{u}(\theta_{c}) 10^{S_{c,u}/10}\}}\textbf{q}_{c_{max},u}
\end{IEEEeqnarray}
The remaining inter-cell interference is modelled as spatially white Gaussian noise\cite{ETSITR 125996V11}, which comprises the noise vector $\boldsymbol{n}_{u}=\begin{bmatrix}
n_{1} \\
n_{2}\\
\end{bmatrix}$ where $n_{1}$ and $n_{2}$ are independent Gaussian variables:
\begin{align}\label{eq:rest_interference}
N_{0}=E[n_{1}n_{1}^{H}]=E[n_{2}n_{2}^{H}]=\sum_{\shortstack{$ \scriptstyle c \in \B \backslash \{b(u)\}$ \\ $ \scriptstyle c \neq c_{max}$}} \hat{H}_{u}(\theta_{c}) +\eta_{u}
\end{align}
Hence, the received signal (${\boldsymbol y}$) is given by:
\begin{IEEEeqnarray}{c}\label{eq:received-MMSE}
\textbf{y}_{u}= \textbf{k}_{u}\textbf{x}+\textbf{v}_{u}+\textbf{n}_{u}
\end{IEEEeqnarray}
with $E[xx^{H}]=P$.  The linear MMSE combining vector $\textbf{w}_{u}=[w_1\ w_2]^T$, is given by:
\begin{IEEEeqnarray}{c}\label{eq:MMSE-mat}
\textbf{w}_{u}=\textbf{k}_{u}^{H}{(\textbf{k}_{u} \textbf{k}_{u}^{H}+\frac{\boldsymbol{\Phi}+N_{0}\textbf{I}}{P})}^{-1}
\end{IEEEeqnarray}
where{\boldmath $\Phi$} is the autocorrelation of interference vector $\textbf{v}$:
\begin{IEEEeqnarray}{c}\label{eq:v-mat}
\boldsymbol{ \Phi}=E[\textbf{v}_{u}\textbf{v}_{u}^{H}]= 
\begin{bmatrix}
|v_{1}|^{2} & v_{1}v_{2}^{H}\\
v_{2}v_{1}^{H} & |v_{2}|^{2}\\
\end{bmatrix}=
\begin{bmatrix}
\phi _{11} & \phi _{12}\\
\phi _{21} & \phi _{22}\\
\end{bmatrix}
\end{IEEEeqnarray}
By applying the MMSE weights on the received signal, the post processing SINR is calculated as:
\begin{align}\label{eq:SINR-MMSE}
&\gamma_{u} ^{MMSE}= \nonumber \\
&\frac{P |w_{1}k_{1}+w_{2}k_{2}|^{2} }{{|w_{1}|}^2\phi _{11}+{|w_{2}|}^2\phi _{22}+ 2 Re\{w_{1}w_{2}^{H}\phi _{12}\}+N_{0}(|w_{1}|^{2}+|w_{2}|^2)}
\end{align} 
\begin{figure}
\subfloat[with fast fading]{
\includegraphics[width=0.5\columnwidth]{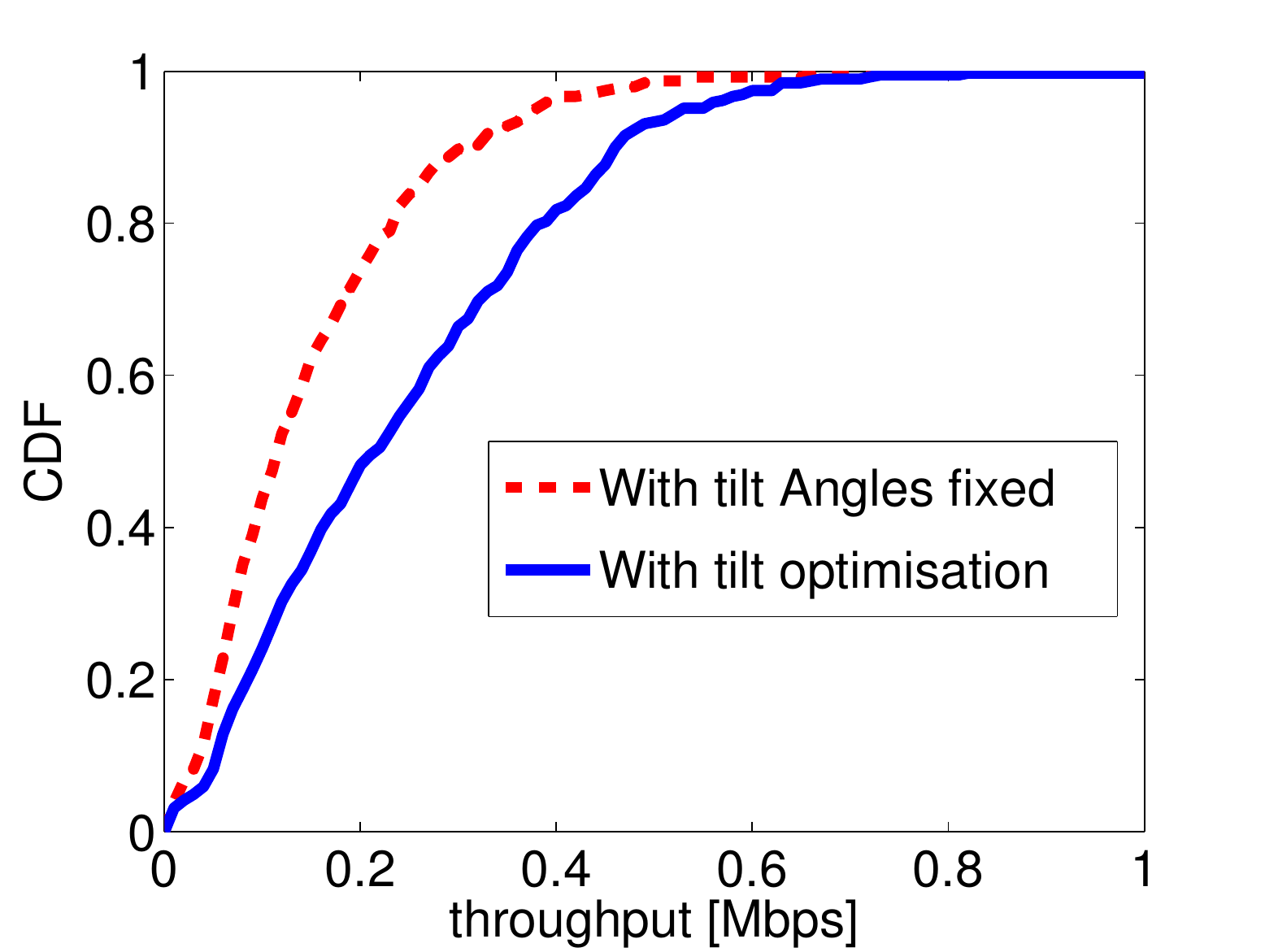}\label{fig:cdf_dublin_MMSE_ff}
}
\subfloat[without fast fading]{
\includegraphics[width=0.5\columnwidth]{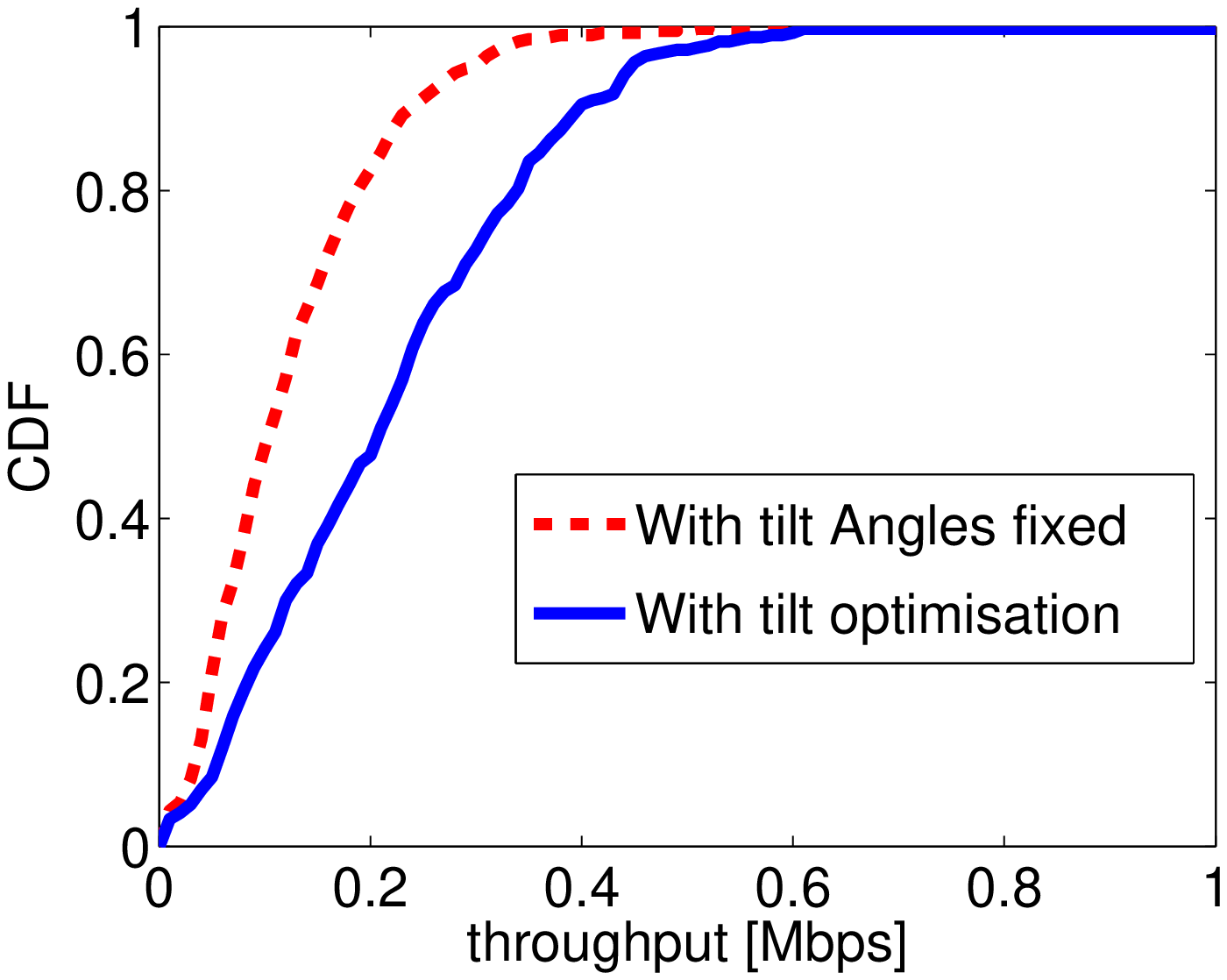}\label{fig:cdf_dublin_MMSE_nff}
}
\caption{User throughput CDFs with SIMO links and MMSE detection, Dublin example. Fast fading is modelled by generating $300$ samples using the 3GPP typical urban channel model, where the speed of the mobile user and carrier frequency are $3km/h$ and $2GHz$ respectively.}
\label{fig:MMSE_cdf}
\end{figure}
We average the post processing SINRs over the multipath fading samples. Using the averaged post processing SINRs,  user throughputs with and without tilt optimisation can be calculated using (\ref{eq:capacity-truncated}). 

Fig \ref{fig:MMSE_cdf} shows CDF of the user throughputs for SIMO links with MMSE detection, with and without flat fading.  As expected, the use of MMSE detection yields significant improvements in the user throughputs.  The throughput gains achieved by tilt optimisation can be compared for SISO links and for SIMO links with an optimal LMMSE detector by comparing Figs \ref{fig:dublin_cdf} and \ref{fig:MMSE_cdf}.   The gain in the mean user throughput  achieved by tilt optimisation is decreased from $83.07\%$ to $67.42\%$ when MMSE detection is employed.  However, the gain in the log-sum-rate (which is the objective function of optimisation ${P2}$) only changes from $22.29\%$ to $22.00\%$. That is, while MMSE detection enhances inter-cell interference mitigation, tilt optimisation can still yield significant improvements in network capacity.   

We can investigate this behaviour in more detail as follows.  Let
\begin{IEEEeqnarray}{c}\label{eq:epsilon}
\epsilon_{u}=\frac{\max_{c \in \B \backslash \{b(u)\}} \{\hat{H}_{u}(\theta_{c}) 10^{S_{c,u}/10}\}}{\sum_{c \in \B \backslash \{b(u)\} } \hat{H}_{u}(\theta_{c})10^{S_{c,u}/10}}
\end{IEEEeqnarray}
be the ratio of the largest interferer to the total interference experienced by a user $u$.   The CDF of $\epsilon$ for the Dublin example is shown in Fig \ref{fig:Interference_prof}(a).     It can be seen that approximately 40\% of users have $\epsilon$ values less than 0.5 i.e. for 40\% of users the the strongest interferer power is less than the sum of the power of the other interferers.   Fig \ref{fig:Interference_prof}(b) shows the corresponding spatial distribution of  $\epsilon$.  It can be seen that the strongest interferer is dominant at the edge of antenna sectors and along the nulls of the sector antennas. However, the intensity of the strongest interferer decreases along the edges of the base station coverage area and alongside the antennas.    Table \ref{tb:gains} details the throughput gains achieved by tilt angle optimisation for both SISO and SIMO links and for users with different $\epsilon $ ratios.  It can be seen that the throughput gain obtained by tilt angle optimisation for users with $\epsilon > .5$  is reduced when SIMO links are used.    However,  the gain is similar for both SISO and SIMO links for users with $\epsilon \leq .5$, once MMSE post processing is applied, and as noted above this consists of approximately 40\% of users. 

In summary, although the mean user throughput is improved for both fixed and optimal tilt angles for SIMO links with MMSE, tilt optimisation can still yield considerable performance gains.
\begin{figure}
\subfloat[]{
\includegraphics[width=.5\columnwidth]{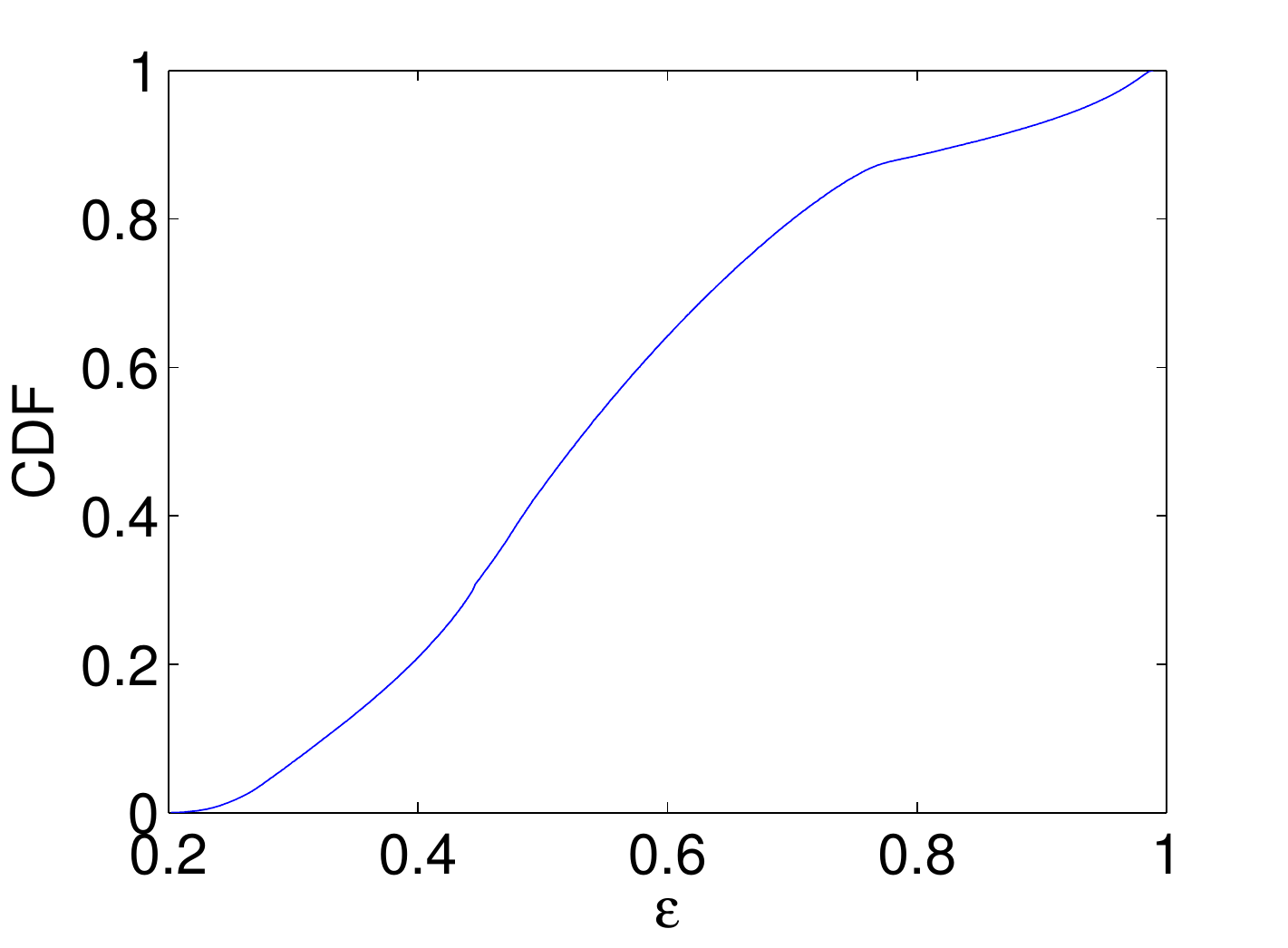}\label{interference_cdf}
}
\subfloat[]{
\includegraphics[width=.5\columnwidth]{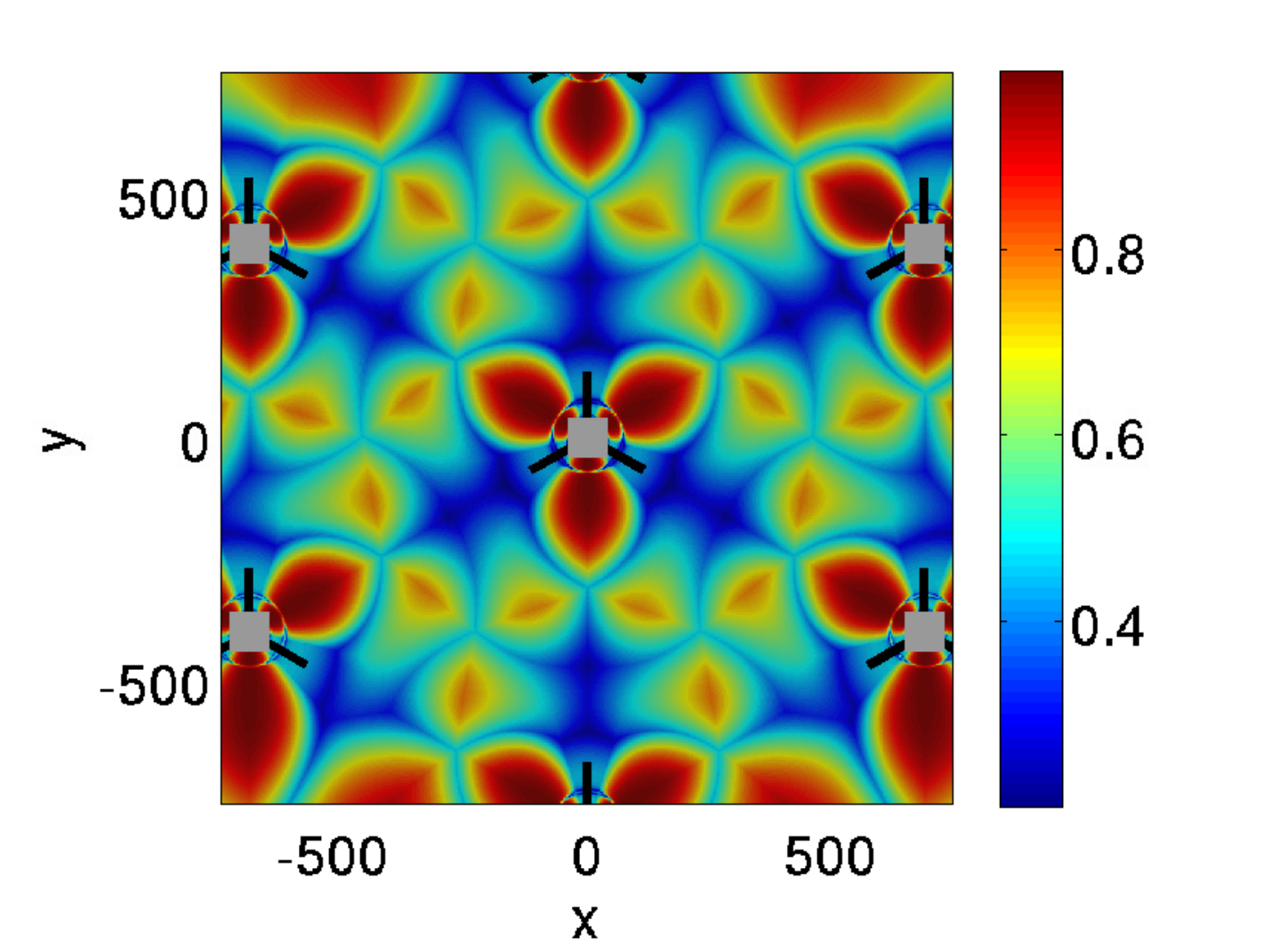}\label{interference_loc}
}
\caption{Contribution of the strongest interference to the total Interference:  (a) cumulative distributed function of $\epsilon$ in the central BS coverage area.    (b) Distribution of $\epsilon$ relative to the user positions. }
\label{fig:Interference_prof}
\end{figure}
\begin{table}
\caption{User throughput gains due to tilt angle optimisation for both SISO and SIMO links and vs $\epsilon$,  Dublin example.}
\centering
\begin{tabular}{|l||l|l|}
\hline
&\multicolumn{2}{l|}{Mean Throughput Gain [\%]}\\
\cline{2-3}
&\multicolumn{1}{l|}{SISO}&\multicolumn{1}{l|}{SIMO \& MMSE Detection}\\
\hline\hline
$ 0\le\epsilon_{u} \le .5$ &$54.5$&$52.3$\\
$ .5 < \epsilon_{u} \le .8$ &$128$&$85.99$\\
$ .8 < \epsilon_{u}  \le 1$ &$147$&$91.99$\\
\hline
$  0\le\epsilon_{u} \leq 1$ &$83.07$&$67.42$\\
\hline
\end{tabular}
\label{tb:gains}
\end{table}
\color{black}\section{Conclusions and Future Work}
\label{sec:conclusion}
\subsection{Conclusions}
In this paper we formulate adaptation of antenna tilt angle as a utility fair optimisation task.   Namely, the objective is to jointly adjust antenna tilt angles within the cellular network so as to maximise user utility, subject to network constraints.   Adjustments at base stations must be carried out jointly in a coordinated manner in order to manage interference.    This optimisation problem is non-convex, but we show that under certain conditions it can be reformulated as a convex optimisation.     Specifically, we show that (i) in the high SINR operating regime and with an appropriate choice of variables the optimisation is convex for any concave utility function, and (ii) in any SINR regime the optimisation can be formulated in a convex manner when the objective is a proportional fair rate allocation.    Since the optimisation is not well-suited to use of standard dual methods, we develop a primal-dual method for finding the optimal antenna tilt angles.  This approach is lightweight, making use of measurements which are already available at base stations, and suited to distributed implementation.   The effectiveness of the proposed approach is demonstrated using a number of simulation examples, including a realistic example based on the cellular network in Dublin, Ireland, and is found to yield considerable performance gains.


\section*{Acknowledgment}
The authors would like to thank Dr. Holger Claussen and Dr. David Lopez-Lopez of Bell Labs, Dublin for their valuable comments and suggestions.
\bibliography{references}
\bibliographystyle{ieeetr}

\section*{Appendix: Optimisation Problem}
Consider the optimisation problem
\begin{IEEEeqnarray*}{rCl}
&\min_{\mathbf{x}} f(\mathbf{x})\qquad s.t. \quad & g_i(\mathbf{x}) \le 0,\quad i=1,\cdots,m
\end{IEEEeqnarray*}
with $\mathbf{x}\in \mathcal{R}^n$, $f(\mathbf{x}):\mathcal{R}^n \rightarrow \mathcal{R}$ convex, $g_i(\mathbf{x}):\mathcal{R}^n\rightarrow \mathcal{R}$, $i=1,\cdots,m$ convex.  For simplicity we will assume that $f(\mathbf{x})$, $g_i(\mathbf{x})$ are differentiable, but this could be relaxed.  The optimisation problem is convex and so at least one solution exists, let $X^*$ denote the set of solutions   Assuming the Slater condition is satisfied, then strong duality holds and the KKT conditions are necessary and sufficient for optimality.   The Lagrangian is
\begin{IEEEeqnarray*}{rCl}
L(\mathbf{x},\mathbf{u}) = f(\mathbf{x}) + \sum_{i=1}^m u_i g_i(\mathbf{x})
\end{IEEEeqnarray*}
where $u_i$ is the multiplier associated with constraint $g_i(\mathbf{x})\ge 0$ and $\mathbf{u}=(u_1,\cdots,u_m)$.  At an optimum $\mathbf{x}^*\in X^*$, the multipliers must lie in set $U(\mathbf{x}^*)=\{\mathbf{u}: \mathbf{u}= \arg \sup_{\mathbf{u}\ge 0} L(\mathbf{x}^*,\mathbf{u})\}$.  

 \subsection{Gradient Algorithm}
We consider the following primal-dual update
\begin{IEEEeqnarray}{rCl}
x_j(t+1) &=& x_j(t) - \alpha \partial_{x_j}L(\mathbf{x}(t),\mathbf{u}(t)),\quad j=1,\cdots,n \label{eq:dyn1} \IEEEeqnarraynumspace\\
u_i(t+1) &=& \left[u_i(t) + \alpha \partial_{u_i}L(\mathbf{x}(t),\mathbf{u}(t))\right]^+,\quad i=1,\cdots,m \label{eq:dyn2}
\end{IEEEeqnarray}
where step size $\alpha>0$ and $\partial_{x_j}L(\mathbf{x},\mathbf{u})$, $\partial_{u_i}L(\mathbf{x}(t),\mathbf{u}(t))$ are subgradients of $L(\mathbf{x}(t),\mathbf{u}(t))$ with respect to $x_j$ and $u_i$ respectively.   We have $\partial_{x_j}L(\mathbf{x},\mathbf{u})=\partial_{x_j} f(\mathbf{x}) + \sum_{i=1}^m u_i \partial_{x_j} g_i(\mathbf{x})$ and $\partial_{u_i}L(\mathbf{x}(t),\mathbf{u}(t))=g_i(\mathbf{x})$ with $\partial_{x_j} f(\mathbf{x})$ a subgradient of $f(\mathbf{x})$ with respect to $x_j$, $\partial_{x_j} g_i(\mathbf{x})$ a subgradient of $g_i(\mathbf{x})$ with respect to $x_j$.  Projection $[z]^+=z$ when $z\ge0$, $0$ otherwise.  

 \subsection{Fixed Points}

\begin{lemma}[Fixed points]\label{th:fixed}
$(\mathbf{x}^*,\mathbf{u}^*)$ with $\mathbf{x}^*\in X^*$, $\mathbf{u}^*\in U(\mathbf{x}^*)$ is a fixed point of the dynamics (\ref{eq:dyn1})-(\ref{eq:dyn2}).
\end{lemma}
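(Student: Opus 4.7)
The plan is to verify directly that if $(\mathbf{x}^*,\mathbf{u}^*)$ satisfies the KKT conditions, then applying either update leaves the pair unchanged, so the right-hand sides of (\ref{eq:dyn1}) and (\ref{eq:dyn2}) both reproduce $(\mathbf{x}^*,\mathbf{u}^*)$. The hypotheses stated just before the lemma --- convexity of $f$ and $g_i$, differentiability, and Slater's condition --- guarantee strong duality and that the KKT conditions are necessary and sufficient for optimality, so I may freely invoke stationarity, primal feasibility, dual feasibility, and complementary slackness at $(\mathbf{x}^*,\mathbf{u}^*)$.

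For the primal update, stationarity of the Lagrangian gives $\partial_{x_j} f(\mathbf{x}^*) + \sum_{i=1}^m u_i^* \partial_{x_j} g_i(\mathbf{x}^*) = 0$ for each $j$. Since this expression is exactly $\partial_{x_j} L(\mathbf{x}^*,\mathbf{u}^*)$, substituting into (\ref{eq:dyn1}) yields $x_j(t+1)=x_j^*$, i.e.\ the primal component is unchanged.

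For the dual update, note that $\partial_{u_i} L(\mathbf{x}^*,\mathbf{u}^*)=g_i(\mathbf{x}^*)$. I would split into two cases according to complementary slackness. If $g_i(\mathbf{x}^*)=0$, then $[u_i^* + \alpha\cdot 0]^+ = u_i^*$ because $u_i^*\ge 0$. If $g_i(\mathbf{x}^*)<0$, then complementary slackness forces $u_i^*=0$, and consequently $[0 + \alpha g_i(\mathbf{x}^*)]^+ = 0 = u_i^*$ since the argument of the projection is strictly negative. In either case $u_i(t+1)=u_i^*$.

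There is no substantive obstacle beyond organising the case analysis: everything reduces to reading off the KKT conditions and checking that the projection $[\cdot]^+$ absorbs the negative-slackness case. The only mild subtlety is recognising that the $x$-update uses an unprojected gradient step (which is fine here because stationarity yields a zero subgradient, not merely a sign condition), while the $u$-update needs the projection precisely to handle the inactive constraints where $g_i(\mathbf{x}^*)<0$ and $u_i^*=0$.
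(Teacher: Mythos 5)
Your proposal is correct and follows essentially the same argument as the paper: stationarity kills the primal update, and the case split on $g_i(\mathbf{x}^*)=0$ versus $g_i(\mathbf{x}^*)<0$ (with complementary slackness and the projection $[\cdot]^+$ handling the latter) shows the dual update is also fixed. No gaps.
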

\begin{proof}
From the KKT conditions, $\partial_{x_j}L(\mathbf{x}^*,\mathbf{u}^*)=0$ and so $(\mathbf{x}^*,\mathbf{u}^*)$ is a fixed point of (\ref{eq:dyn1}).  Since $(\mathbf{x}^*,\mathbf{u}^*)$ is feasible, $\partial_{u_i}L(\mathbf{x}^*,\mathbf{u}^*)=g_i(\mathbf{x}^*)\le 0$.   We need to consider two cases: (i) $\partial_{u_i}L(\mathbf{x}^*,\mathbf{u}^*)=0$ in which case $(\mathbf{x}^*,\mathbf{u}^*)$ is a fixed point of (\ref{eq:dyn2}) and (ii) $\partial_{u_i}L(\mathbf{x}^*,\mathbf{u}^*)<0$ in which case by complementary slackness $u_i^*=0$ and this is also a fixed point of (\ref{eq:dyn2}).   Hence, every $(\mathbf{x}^*,\mathbf{u}^*)$ is a fixed point of the dynamics (\ref{eq:dyn1})-(\ref{eq:dyn2}).
\end{proof}

 \subsection{Convergence}
Let $V(\mathbf{x},\mathbf{u}):=  \min_{\mathbf{x}^*\in X^*,\mathbf{u}^*\in U(\mathbf{x}^*)}\sum_{j=1}^n(x_j-x_j^*)^2 + \sum_{i=1}^m (u_i-u_i^*)^2$.   Observe that (i) $V(\mathbf{x},\mathbf{u})\ge 0$ and (ii) $V(\mathbf{x},\mathbf{u})= 0$ if and only if $\mathbf{x}\in X^*$ and $\mathbf{u}\in U(\mathbf{x}^*)$.

\begin{lemma}\label{th:conv1}
Under update (\ref{eq:dyn1})-(\ref{eq:dyn2}), 
\begin{IEEEeqnarray*}{rCl}
&&V(\mathbf{x}(t+1),\mathbf{u}(t+1)) \le \nonumber \\
&&\bigg[V(\mathbf{x}(t),\mathbf{u}(t)) -2\alpha \left( L(\mathbf{x}(t),\mathbf{u}^*(t)) - L(\mathbf{x}^*(t),\mathbf{u}(t)) \right) \nonumber\\
&& \quad \: + \alpha^2(t) \epsilon(\mathbf{x}(t),\mathbf{u}(t))\bigg]
\end{IEEEeqnarray*}
where $\epsilon(\mathbf{x},\mathbf{u}) = \sum_{j=1}^n\left(\partial_{x_j}L(\mathbf{x},\mathbf{u})\right)^2 + \sum_{i=1}^m g_i^2(\mathbf{x})$ and $(\mathbf{x}^*(t),\mathbf{u}^*(t))=\arg \min_{\mathbf{x}^*\in X^*,\mathbf{u}^*\in U(\mathbf{x}^*)}\sum_{j=1}^n(x_j(t)-x_j^*)^2 + \sum_{i=1}^m (u_i(t)-u_i^*)^2$.
\end{lemma}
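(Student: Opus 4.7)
The plan is to treat the primal and dual updates separately, bound the squared distance to a particular reference optimum at time $t$, and then appeal to convexity/affineness of the Lagrangian to convert the cross terms into the Lagrangian gap $L(\mathbf{x}(t),\mathbf{u}^*(t)) - L(\mathbf{x}^*(t),\mathbf{u}(t))$.

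First I would use the definition of $V$ as an infimum to get a free upper bound: since $(\mathbf{x}^*(t),\mathbf{u}^*(t))\in X^*\times U(\mathbf{x}^*(t))$ is a feasible reference,
\begin{IEEEeqnarray*}{rCl}
V(\mathbf{x}(t+1),\mathbf{u}(t+1)) \le \sum_{j=1}^n (x_j(t+1)-x_j^*(t))^2 + \sum_{i=1}^m (u_i(t+1)-u_i^*(t))^2.
\end{IEEEeqnarray*}
For the primal coordinates I would just substitute (\ref{eq:dyn1}) and expand the square, giving an exact expression with a cross term $-2\alpha(x_j(t)-x_j^*(t))\partial_{x_j}L(\mathbf{x}(t),\mathbf{u}(t))$ and a square term $\alpha^2(\partial_{x_j}L)^2$. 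For the dual coordinates the same expansion would follow from (\ref{eq:dyn2}), except that we first use non-expansivity of $[\cdot]^+$: since $u_i^*(t)\ge 0$, $(\,[\,u_i(t)+\alpha g_i(\mathbf{x}(t))\,]^+ - u_i^*(t))^2 \le (u_i(t)+\alpha g_i(\mathbf{x}(t))-u_i^*(t))^2$, after which the expansion yields a cross term $+2\alpha(u_i(t)-u_i^*(t))g_i(\mathbf{x}(t))$ and square term $\alpha^2 g_i^2(\mathbf{x}(t))$. Collecting squares recovers $\alpha^2\epsilon(\mathbf{x}(t),\mathbf{u}(t))$.

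The key step is then the convexity/affineness argument for the cross terms. Since $L(\cdot,\mathbf{u})$ is convex in $\mathbf{x}$ (it is the sum of a convex $f$ and a nonnegative combination of convex $g_i$), the subgradient inequality gives
\begin{IEEEeqnarray*}{rCl}
\sum_{j=1}^n(x_j(t)-x_j^*(t))\,\partial_{x_j}L(\mathbf{x}(t),\mathbf{u}(t)) \ge L(\mathbf{x}(t),\mathbf{u}(t)) - L(\mathbf{x}^*(t),\mathbf{u}(t)).
\end{IEEEeqnarray*}
Since $L(\mathbf{x},\cdot)$ is affine (hence concave) in $\mathbf{u}$ with gradient $g_i(\mathbf{x})$, the concave subgradient inequality gives
\begin{IEEEeqnarray*}{rCl}
-\sum_{i=1}^m(u_i(t)-u_i^*(t))\,g_i(\mathbf{x}(t)) \ge L(\mathbf{x}(t),\mathbf{u}^*(t)) - L(\mathbf{x}(t),\mathbf{u}(t)).
\end{IEEEeqnarray*}
Adding these two inequalities the $L(\mathbf{x}(t),\mathbf{u}(t))$ terms cancel, leaving exactly $L(\mathbf{x}(t),\mathbf{u}^*(t)) - L(\mathbf{x}^*(t),\mathbf{u}(t))$. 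Substituting this lower bound into the combined expansion yields the claimed inequality.

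The main subtlety to watch is the non-expansivity step for the dual update: I need $u_i^*(t)\ge 0$, which is immediate because $u_i^*(t)\in U(\mathbf{x}^*(t))$ consists of nonnegative KKT multipliers. Beyond that the argument is a routine primal-dual subgradient telescoping bound, and no further tricks should be required.
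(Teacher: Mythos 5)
Your proposal is correct and follows essentially the same route as the paper: expand the squared distances under the primal and dual updates, apply the convex subgradient inequality in $\mathbf{x}$ and the exact affine identity in $\mathbf{u}$ to the cross terms, and combine. In fact you are slightly more careful than the paper at the dual step, where you invoke non-expansivity of the projection $[\cdot]^+$ (using $u_i^*(t)\ge 0$) while the paper writes that step as an equality and silently drops the projection.
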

\begin{proof}
From (\ref{eq:dyn1}), for any $x^*\in X^*$ we have
\begin{align}
&\sum_{j=1}^n \left(  x_j(t+1)-x_j^*\right)^2 \nonumber\\
&= \sum_{j=1}^n \left(x_j(t)-x_j^* - \alpha\partial_{x_j}L(\mathbf{x}(t),\mathbf{u}(t)) \right)^2 \nonumber\\
 &=\sum_{j=1}^n(x_j(t)-x_j^*)^2 + 2\alpha \sum_{j=1}^n(x_j^*-x_j(t)) \partial_{x_j}L(\mathbf{x}(t),\mathbf{u}(t)) \nonumber\\
 & \quad \: +\alpha^2(t) \sum_{j=1}^n\left(\partial_{x_j}L(\mathbf{x}(t),\mathbf{u}(t))\right)^2 \nonumber\\
 &\stackrel{(a)}{\le} \sum_{j=1}^n(x_j(t)-x_j^*)^2 + 2\alpha \left(L(\mathbf{x}^*,\mathbf{u}(t)) -L(\mathbf{x}(t),\mathbf{u}(t))\right) \nonumber\\
& \quad \: +\alpha^2(t) \sum_{j=1}^n\left(\partial_{x_j}L(\mathbf{x}(t),\mathbf{u}(t))\right)^2 \label{eq:x}
\end{align}
where $(a)$ follows from the fact that $L(\mathbf{x}^*,\mathbf{u})-L(\mathbf{x},\mathbf{u})  \ge \sum_{j=1}^n\left(x_j^*-x_j  \right)\partial_{x_j} L(\mathbf{x},\mathbf{u})$ (from the definition of a subgradient).  From (\ref{eq:dyn2}) we have for any $\mathbf{u}^*\in U(\mathbf{x}^*)$ that
\begin{align}
& \sum_{i=1}^m\left(u_i(t+1)-u_i^*\right)^2 = \sum_{i=1}^m\left( u_i(t) + \alpha g_i(\mathbf{x}(t)) -u_i^* \right)^2 \nonumber\\
 &\qquad  =\sum_{i=1}^m\left(u_i(t)-u_i^*\right)^2 \nonumber \\
 & \qquad \qquad + 2\alpha  \sum_{i=1}^m \bigg( (u_i(t)-u_i^*) g_i(\mathbf{x}(t)) + \alpha^2(t) g_i^2(\mathbf{x}(t))\bigg) \nonumber\\
 & \qquad \stackrel{(a)}{=}   \sum_{i=1}^m\left(u_i(t)-u_i^*\right)^2 + 2\alpha   \left(L(\mathbf{x}(t),\mathbf{u}(t)) - L(\mathbf{x}(t),\mathbf{u}^*) \right) \nonumber \\
 & \qquad \qquad + \alpha^2(t) g_i^2(\mathbf{x}(t)) \label{eq:y}
\end{align}
where $(a)$ follows from the observation that  
\begin{IEEEeqnarray*}{rCl}
L(\mathbf{x},\mathbf{u}) - L(\mathbf{x},\mathbf{u}^*) &=& f(\mathbf{x}) + \sum_{i=1}^m u_i g_i(\mathbf{x}) - f(\mathbf{x}) - \sum_{i=1}^m u_i^* g_i(\mathbf{x}) \nonumber \\
&=& \sum_{i=1}^m(u_i-u_i^*) g_i(\mathbf{x})
\end{IEEEeqnarray*}
Then, from (\ref{eq:x}) and (\ref{eq:y}),
\begin{align*}
&V_{\mathbf{u}^*}(\mathbf{x}(t+1),\mathbf{u}(t+1))  \nonumber\\
& \quad \leq  \sum_{j=1}^n(x_j(t+1)-x_j^*(t))^2 + \sum_{i=1}^m (u_i(t+1)-u_i^*(t))^2  \\
&\quad  \leq   V_{\mathbf{u}^*}(\mathbf{x}(t),\mathbf{u}(t)) -2\alpha \left( L(\mathbf{x}(t),\mathbf{u}^*(t)) - L(\mathbf{x}^*(t),\mathbf{u}(t)) \right)\nonumber\\
&\qquad  \quad + \alpha^2(t) \epsilon(\mathbf{x}(t),\mathbf{u}(t))
\end{align*}
\end{proof}

\begin{lemma}\label{th:main}
Under update (\ref{eq:dyn1})-(\ref{eq:dyn2}), when $\frac{1}{t}\sum_{\tau=0}^t  \epsilon(\mathbf{x}(\tau),\mathbf{u}(\tau))\le M$ (e.g. this holds when $(\mathbf{x}(\tau),\mathbf{u}(\tau))$ is bounded and $f(\mathbf{x})$, $g(\mathbf{x})$ are continuous) we have
\begin{IEEEeqnarray*}{rCl}
0 & \le & \frac{1}{t}\sum_{\tau=0}^t \left( L(\mathbf{x}(\tau),\mathbf{u}^*(\tau)) - L(\mathbf{x}^*(\tau),\mathbf{u}(\tau)) \right)  \nonumber\\
& \leq & \frac{1}{2\alpha t}V(\mathbf{x}(0),\mathbf{u}(0)) + \frac{\alpha M}{2}
\end{IEEEeqnarray*}
where $\mathbf{x}^*(\tau)\in X^*$ and $\mathbf{u}^*(\tau)\in U^*(\mathbf{x}^*(\tau))$.
\end{lemma}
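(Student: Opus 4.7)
The plan is to apply Lemma \ref{th:conv1} once per iteration, telescope over $\tau = 0, \ldots, t$, and then read off the running-average bound. There are two claims: the nonnegativity $0 \le \frac{1}{t}\sum_{\tau=0}^t ( L(\mathbf{x}(\tau),\mathbf{u}^*(\tau)) - L(\mathbf{x}^*(\tau),\mathbf{u}(\tau)))$ and the quantitative upper bound. I would handle them separately.

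For the lower bound I would invoke the saddle-point characterisation of a primal-dual optimum. Since the problem is convex and Slater's condition holds, strong duality is in force, so each KKT pair $(\mathbf{x}^*(\tau),\mathbf{u}^*(\tau))$ is a saddle point of $L$: for every $\mathbf{x}$ and every $\mathbf{u} \ge 0$,
\begin{IEEEeqnarray*}{c}
L(\mathbf{x}^*(\tau),\mathbf{u}) \;\le\; L(\mathbf{x}^*(\tau),\mathbf{u}^*(\tau)) \;\le\; L(\mathbf{x},\mathbf{u}^*(\tau)).
\end{IEEEeqnarray*}
Instantiating at $\mathbf{x}=\mathbf{x}(\tau)$ and $\mathbf{u}=\mathbf{u}(\tau)$ and combining the two halves yields $L(\mathbf{x}(\tau),\mathbf{u}^*(\tau)) - L(\mathbf{x}^*(\tau),\mathbf{u}(\tau)) \ge 0$ termwise, which gives the left inequality immediately after averaging.

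For the upper bound I would rearrange the conclusion of Lemma \ref{th:conv1} into the one-step telescope inequality
\begin{IEEEeqnarray*}{rCl}
2\alpha\bigl(L(\mathbf{x}(\tau),\mathbf{u}^*(\tau)) - L(\mathbf{x}^*(\tau),\mathbf{u}(\tau))\bigr)
&\le& V(\mathbf{x}(\tau),\mathbf{u}(\tau)) - V(\mathbf{x}(\tau+1),\mathbf{u}(\tau+1)) \\
&& \: + \alpha^{2}\,\epsilon(\mathbf{x}(\tau),\mathbf{u}(\tau)),
\end{IEEEeqnarray*}
sum over $\tau = 0,\ldots,t$ so that the $V$-terms telescope to $V(\mathbf{x}(0),\mathbf{u}(0)) - V(\mathbf{x}(t+1),\mathbf{u}(t+1))$, drop the nonnegative $V(\mathbf{x}(t+1),\mathbf{u}(t+1))$, divide by $2\alpha t$, and substitute the hypothesis $\frac{1}{t}\sum_{\tau=0}^t \epsilon(\mathbf{x}(\tau),\mathbf{u}(\tau)) \le M$. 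This is precisely $\frac{1}{2\alpha t}V(\mathbf{x}(0),\mathbf{u}(0)) + \frac{\alpha M}{2}$.

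The main subtlety will be that the reference optimum $(\mathbf{x}^*(\tau),\mathbf{u}^*(\tau))$ depends on $\tau$, so the ``telescope'' is not literally of the same scalar quantity along the trajectory. The cleanup hinges on the fact that $V$ is itself defined as a \emph{minimum} over $X^* \times U(\mathbf{x}^*)$; Lemma \ref{th:conv1} was stated so that the distance to the closest optimum-at-time-$\tau$ upper-bounds $V(\mathbf{x}(\tau+1),\mathbf{u}(\tau+1))$ as well, because the latter minimises over all feasible reference pairs. I would state this observation explicitly at the start so the telescope is rigorous. The remaining work (nonnegativity of $V$, the $\frac{1}{t}$ averaging, and verifying that boundedness of $(\mathbf{x}(\tau),\mathbf{u}(\tau))$ together with continuity of $f$ and the $g_i$ gives the stated $M$) is routine.
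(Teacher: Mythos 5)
Your proposal is correct and follows essentially the same route as the paper's proof: telescope the one-step inequality from Lemma \ref{th:conv1}, drop the nonnegative final $V$ term, average and substitute the bound on $\frac{1}{t}\sum_{\tau=0}^t\epsilon$, and obtain the lower bound from the saddle-point inequalities $L(\mathbf{x}^*,\mathbf{u})\le L(\mathbf{x}^*,\mathbf{u}^*)\le L(\mathbf{x},\mathbf{u}^*)$. Your explicit remark that the telescope remains valid despite the $\tau$-dependent reference optimum, because $V$ is defined as a minimum over all optimal pairs, is a point the paper leaves implicit but is the same underlying argument.
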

\begin{proof}
By Lemma \ref{th:conv1},
\begin{align*}
&V(\mathbf{x}(t+1),\mathbf{u}(t+1)) - V(\mathbf{x}(0),\mathbf{u}(0)) \\
&\qquad \le \sum_{\tau=0}^t \Big( 
-2\alpha \big( L(\mathbf{x}(\tau),\mathbf{u}^*(\tau))- L(\mathbf{x}^*(\tau),\mathbf{u}(\tau)) \big) \\
&\qquad \qquad \qquad \qquad+ \alpha^2(t) \epsilon(\mathbf{x}(\tau),\mathbf{u}(\tau))
\Big)
\end{align*}
Hence,
\begin{align*}
&\frac{1}{t}\sum_{\tau=0}^t \left( L(\mathbf{x}(\tau),\mathbf{u}^*(\tau)) - L(\mathbf{x}^*(\tau),\mathbf{u}(\tau)) \right)  \nonumber \\
&\qquad \le \frac{1}{2\alpha t}V(\mathbf{x}(0),\mathbf{u}(0)) + \frac{\alpha}{2t}\sum_{\tau=0}^t  \epsilon(\mathbf{x}(\tau),\mathbf{u}(\tau))
\end{align*}
For $x^*\in X^*$, $\mathbf{u}^*\in U(\mathbf{x}^*)$ recall $\mathbf{u}^*= \arg \sup_{\mathbf{u}\ge 0} L(\mathbf{x}^*,\mathbf{u})$ and $\mathbf{x}^* = \arg \inf_{\mathbf{x}} L(\mathbf{x},\mathbf{u}^*)$.  Hence, $L(\mathbf{x}^*,\mathbf{u}) \le L(\mathbf{x}^*,\mathbf{u}^*) \le L(\mathbf{x},\mathbf{u}^*)$ and $L(\mathbf{x},\mathbf{u}^*) - L(\mathbf{x}^*,\mathbf{u})\ge  0$.   Therefore, $L(\mathbf{x}(\tau),\mathbf{u}^*(\tau)) - L(\mathbf{x}^*,\mathbf{u}(\tau))\ge 0$.   Substituting for $\frac{1}{t}\sum_{\tau=0}^t  \epsilon(\mathbf{x}(\tau),\mathbf{u}(\tau))\le M$ then yields the result.
\end{proof}

\begin{biography}{Bahar Partov} is pursuing a PhD degree at Hamilton Institute together with Bell-labs Alcatel-Lucent Ireland. She received her master's degree from University of Essex at 2009. She did her undergraduate degree at University of Tabriz, Iran. Her current research interests are distributed algorithms for self-organized networks.
\end{biography}
\begin{biography}{Doug Leith} graduated from the University of Glasgow in 1986 and was awarded his PhD, also from the University of Glasgow, in 1989. In 2001, Prof. Leith moved to the National University of Ireland, Maynooth to assume the position of SFI Principal Investigator and to establish the Hamilton Institute (www.hamilton.ie) of which he is Director.  His current research interests  include the analysis and design of network congestion control and  resource allocation in wireless networks.
\end{biography}
\begin{biography}{Rouzbeh Razavi} 
 is a member of technical staff in the
Autonomous Networks and Systems
Research Department at Alcatel-Lucent Bell
Labs Ireland and United Kingdom. He
received his B.Sc. degree in electrical and
electronics engineering from Khajeh Nasir
Toosi University of Technology, Tehran, Iran. He also
received a master’s degree with distinction in
telecommunication and information systems and a
Ph.D. in real time multimedia communication over
wireless networks, both from the University of Essex,
United Kingdom. At Bell Labs, Dr. Razavi's current
research work involves developing algorithms for large
scale, distributed, self-organizing networks for the next
generation of wireless networks (4G and beyond) and
small cell flat cellular networks. He has published more
than 42 technical papers in peer reviewed journals and
conferences and has authored five book chapters.
\end{biography}
\end{document}